%%%%%%%%%%%%%%%%%%
%%%%% AMS-LaTeX preamble
%%%%%%%%%%%%%%%%%%
\documentclass[a4paper,10pt]{amsart}

\usepackage[margin=2cm]{geometry}

\usepackage{enumerate}

\usepackage[T1]{fontenc}
\usepackage[utf8]{inputenc}

%% The amssymb package provides various useful mathematical symbols
\usepackage{amssymb}
\usepackage{amsmath}
\usepackage{epsfig}
\usepackage{subfigure}
\usepackage{graphicx}
\usepackage{breqn} %to use dmath
\usepackage{multirow}
\usepackage{enumitem}
\usepackage{algorithm}
\usepackage{algorithmic}
\usepackage[usenames]{color}
\usepackage{url}

%% The amsthm package provides extended theorem environments
\usepackage{amsthm}
\newtheorem{theorem}{Theorem}

\theoremstyle{remark}
\newtheorem{example}[theorem]{Example}

\newtheorem*{thm}{\textbf{\textrm{Theorem}}}

\usepackage{tikz}

% if you have landscape tables
\usepackage[figuresright]{rotating}

% put your own definitions here:
%   \newcommand{\cZ}{\cal{Z}}
	\newcommand{\EQR}{\operatorname{EQR}}
	\newcommand{\BCH}{\operatorname{BCH}}
%   \newtheorem{def}{Definition}[section]
%   ...

% declarations for front matter

\newcommand{\rulesep}{\unskip\ \vrule\ }

\begin{document}

\title[Genetic algorithms with permutation-based representation]{Genetic algorithms with permutation-based representation for computing the distance of linear codes.}

\author{M. P. Cu\'{e}llar}
\email{manupc@decsai.ugr.es}
\address{Department of Computer Science and Artificial Intelligence and CITIC University of Granada, Spain}

\author{J. G\'{o}mez-Torrecillas}
\email{gomezj@ugr.es}
\address{Department of Algebra and IEMath-GR, University of Granada}

\author{F. J. Lobillo}
\email{jlobillo@ugr.es}
\address{Department of Algebra and CITIC, University of Granada}

\author{G. Navarro}
\email{gnavarro@ugr.es}
\address{Department of Computer Science and Artificial Intelligence and CITIC University of Granada, Spain}

%\address{C/. Pdta. Daniel Saucedo Aranda s.n., 18071 Granada (Spain)}

\begin{abstract}
Finding the minimum distance of linear codes is an NP-hard problem. Traditionally, this computation has been addressed by means of the design of algorithms that find, by a clever exhaustive search, a linear combination of some generating matrix rows that provides a codeword with minimum weight. Therefore, as the dimension of the code or the size of the underlying finite field increase, so it does exponentially the run time. In this work, we prove that, given a generating matrix, there exists a column permutation which leads to a reduced row echelon form containing a row whose weight is the code distance. This result enables the use of permutations as representation scheme, in contrast to the usual discrete representation, which makes the search of the optimum polynomial time dependent from the base field. In particular, we have implemented genetic and CHC algorithms using this representation as a proof of concept.  Experimental results have been carried out employing codes over fields with two and eight elements, which suggests that evolutionary algorithms with our proposed permutation encoding are competitive with regard to existing methods in the literature. As a by-product, we have found and amended some inaccuracies in the \textsc{Magma} Computational Algebra System concerning the stored distances of some linear codes.
\end{abstract}

%\begin{keyword}
%linear codes, minimum distance, genetic algorithms
%\end{keyword}

%% main text

\maketitle 

\section{Introduction}\label{Introduction}

Coding Theory is one of the foundational pillars of Information Theory. In his seminal paper \cite{Shannon1948}, Shannon looks for the fundamental limits on signal processing and communication issues. In a noisy channel, redundancy is added to a message in order to recover it whenever transmission errors occur. Shannon's paper relates the correction capability with the error probability of the channel. Given a finite alphabet \(\mathcal{A}\), a block code of length $n$ is a subset \(\mathcal{C} \subseteq \mathcal{A}^n\). The encoding process assigns to each message a codeword, i.e. an element in \(\mathcal{C}\), the codeword is transmitted via a noisy channel and the receiver infers the transmitted codeword from the received word. There are a lot of techniques that can be considered to achieve efficiently the latter task. Historically, a successful strategy consists in taking a finite field \(\mathbb{F}\) as an alphabet, and endowing the codes with some algebraic structure. In this scenario, a linear (block) code is therefore a vector subspace of \(\mathbb{F}^n\). Examples of linear codes are the well known Hamming, Reed-Muller, Reed-Solomon, Goppa, AG or LDPC codes, which have been successfully implemented in CD players, deep space transmissions, wireless communications and more \cite{Huffmann/Pless:2003}. 

The correction capability of a code depends on some distance measure, being the Hamming distance the most usual one. Thus, the computation of the minimum distance of a linear code, or equivalently the minimum weight of all non-zero codewords, becomes a primary task to achieve the error correction capability of the code. However, finding a word with minimum weight for a code is not an easy task. Vardy showed in \cite{Vardy1997} that the decision problem associated to the computation of the minimum distance of a binary linear code is NP-complete. Hence, unless P = NP, we cannot expect to find a general polynomial time exact algorithm to compute the distance of an arbitrary linear code. 

Cryptography, seeking for cryptosystems resistent to attacks from quantum computers, has taken advantages of the latter result. Let us briefly recall the role of linear codes in this setting. Shor's algorithm \cite{Shor94}, that factorizes integers in a quantum computer, is often considered as a major theoretical threat for the current standards in cryptography. As stated in \cite[Page 6]{Nis16}, it is estimated that a quantum computer capable of breaking 2000-bit RSA in a matter of hours could be built by 2030, yielding a serious long-term threat to the cryptosystems currently standardized by the National Institute of Standards and Technology. In this context, Code-based Cryptography \cite{PQCrypto} is one of main research lines looking for a possible new standard concerning Post-quantum Cryptography, i.e. the branch of Cryptography dealing with potentially secure cryptosystems under attacks from quantum computers. Code-based cryptosystems are based on the above-commented difficulty of finding a codeword of minimum weight of an arbitrary linear code, since it is equivalent to decoding a general linear code.  Concretely, the secret key is a code of a family with an efficient decoding algorithm, that is masked so that it looks like a random linear code. Hence any approximate algorithm which is able to compute minimal weight codewords may lead to a potential attack to those cryptosystems.

We can find different approaches in the literature to overcome the distance calculation. One is to design the linear code $\mathcal{C}$ subject to certain constraints using higher algebraic structures, to guarantee a lower bound for the distance $d(\mathcal{C})$. This is the case of a well-known subclass of linear codes named cyclic codes, e.g. BCH \cite{Berlekamp2015} or Reed-Solomon codes \cite{Wicker1994}. These approaches do not tackle the problem of finding the true distance, which remains unknown, although they ensure an error correction capability given by its precalculated lower bound. In many cases, this is enough for practical applications regarding error detection and correction, even if the (potential) full effectiveness of the designed code is not achieved.

Another kind of approaches focuses on finding the true distance using exact algorithms. Here, the problem is considered as a search procedure over a solution space, using heuristics to guide the search and to reduce the computational time of the algorithms. The most known algorithm was designed by Brouwer and Zimmermann \cite{Wassermann/etal:2006}, and later extended by Lisonek and Trummer \cite{LisonekTrummer2016}, and it essays to build information sets from a generating matrix. More recent attempts can be found in the literature, as \cite{joundan19} which focuses on special features of BCH codes and uses a strategy similar to \textit{divide and conquer} to split the code into subcodes. Then, it applies the Brouwer-Zimmermann method to explore smaller subcodes to provide the final code distance.  However, the main drawback is that the efficiency of these methods is still non-polynomial, since the whole solution space has to be explored in the worst case. They have been successfully applied over small binary codes, but their computational time increases exponentially with the length of the code and the bit-size of the underlying finite field. Therefore, for large codes, their time complexity makes its application infeasible. 

A third class of attemtps focuses on providing approximate methods to find lower and/or upper bounds of the distance. In this category, one of the first algorithms was provided by J. S. Leon in \cite{Leon1988}. This method is based on a random sampling of the codewords to approximate the probability that the distance is in a range $[L_B, U_B]$, where $L_B$ and $U_B$ are the lower and upper bounds. Metaheuristics have also been used to solve the problem. Here, the methodology consists on finding  a message word $m$ such that its corresponding codeword $c=mG$ has minimal weight, where $G$ is a generating matrix of the code $\mathcal{C}$. Therefore, the problem formulation uses discrete (often binary) representation to encode a solution $m$ containing $k$ values in the alphabet \(\mathcal{A}\), where $k$ is the dimension of $\mathcal{C}$. Formally, these aim to find an optimal solution $m^*\in \mathcal{A}^k$ that minimizes the fitness $f(m^*)= min_m\{w(mG)\}$ subject to the restriction $f(m^*)>0$ (or equivalently, $m^*\not = 0$), where $w$ denotes the Hamming weight. These methods provide an upper bound of $d(\mathcal{C})$. The first proposals we have found in the literature to tackle the problem  were Genetic Algorithms \cite{DontasDeJong1990} and Simulated Annealing \cite{muxiang_sa_94}. Later, the problem was formulated as a graph traverse problem and a solution with Ant Colony Optimization (ACO) was provided in \cite{BLAND2007391},  using a Tabu Search procedure as local search intensification. The ACO approach was also revisited recently in \cite{8360246}, improving the heuristic to build the ant path. Other metaheuristics, such as Tabu Search, Iterated Local Search, Simulated Annealing, Hill Climbing, and more were compared in \cite{10.1007/978-3-540-73055-2_51}, providing small advances to the problem solution. In these cases, the codes used for experimentation are binary, and the sizes are relatively small (such as [16,7,3]-linear codes in \cite{DontasDeJong1990}, or BCH codes of length $n=255$ and QR codes of length up to $n=199$ in \cite{BLAND2007391} and \cite{8360246}), although the reported results are promising.  An improved Simulated Annealing method to address the problem was recently developed in \cite{BIA0002332} for larger binary codes, such as QR(223, 112, 13), or BCH(511, 286, 55). Other interesting approaches attempt to design the code subject to an established lower bound of the distance by means of Genetic Algorithms, as in \cite{BarbieriCagnoniColavolpe2005}. Regarding specific linear codes, Genetic algorithms have also been employed to exploit properties of BCH and EQR codes to improve the search, as in \cite{Nouhetal13}. 

One of the main limitations in many of the approximate methods is the high selective pressure over the non-valid trivial message $s=(0,0,\ldots ,0)$, which leads to the fitness $f(s)=0$. This problem has been highlighted and studied by some authors, as for instance in \cite{7381308}, and more recently in \cite{Berkani2017ImprovedDO}. In the latter, the effects of the parent selection operator in Genetic Algorithms is studied as an alternative to palliate the high selective pressure problem, although the experimental benefits of the approach are limited to small codes such as Golay(23,12), o BCH(127, 64, 21). One of the most sounded approaches to address the problem of high selective pressure belongs to Askali et al. \cite{askali13}. They design a genetic algorithm to perform the search, and the Multiple Impulse Method (MIM), which is an heuristic that raises the possibility of making random perturbations around nearest zero codewords to find linear combinations that yields an upper bound. The algorithm is tested over a high variety of QR, BCH and double cyclic codes and it is able to obtain the exact distance up to length \(512\) in a relatively low computational time. Finally, in \cite{joundan18} a random local search with the Brouwer-Zimmermann method is used to attack binary BCH and QDC. 

We see that the problem of finding the distance of a linear code is quite complicated and, although it has been tackled by many different perspectives, it remains unsolved for general purposes. 
More specifically, if we focus on metaheuristics, the background problem statement remains the same in all the proposals, and the different authors tackle the problem using different search algorithms, or providing solutions to improve the balance between exploration and intensification. As our interest in this manuscript is to cope with general linear codes, we highlight the following features of the aforementioned ways that will help us to show the contribution of this work: 1) Almost all studied methods focus on the case of binary codes; 2) the solution representation scheme in metaheuristics is therefore the binary encoding, or discrete encoding at most; and 3) the solution representation depends on the size of the alphabet, which is usually the field $\mathbb{F}_2$ with two elements as it has been mentioned. However, codes over larger fields also appear in many applications, see e.g. \cite{Tomas/Rosenthal/Smarandache:2010} for video streaming. 

In this work we propose to address this problem from a completely new perspective. We prove that, given a generating matrix, there exists a column permutation which leads to a reduced row echelon form containing a row whose weight is the code distance. This result allows us to provide a novel problem statement and, therefore, new search mechanisms could be applied to provide a problem solution. In our case, we propose to use a permutation representation of the solution space to search the minimum distance. Hence a Generational Genetic Algorithm (GGA) and a Cross generational elitist selection, Heterogeneous recombination, and Cataclysmic mutation algorithm (CHC) are designed for solving the problem. Our methods are tested with codes of medium size over the fields with two and eight elements. We then compare the experimental results with the ones obtained by some of the algorithms described above. It is worth mentioning that, in the experimental section, we apply the proposal over finite fields larger than $\mathbb{F}_2$, which have not been tackled before because of the intrinsic complexity of the classic solution representation. Therefore, as baseline methods,  we have also implemented GGA and CHC algorithms using the traditional discrete representation. Unlike the state-of-the-art search algorithms, the complexity of our proposal grows polynomially with respect to the bit-size of the elements of the base field. This has allowed us to discover some inaccuracies in the \textsc{Magma} Computational Algebra System concerning the stored distances of some linear codes. Our experiments suggest that this change in the solution representation allows us to overcome local optima solutions and speed up the search of the minimum distance with respect to the traditional approaches.

The article is structured as follows: Section \ref{sec2} contains background concepts on mathematics and coding theory to make this article self-contained and introduce the notation. After that, Section \ref{sec:permRep} proposes the permutation representation as a solution encoding for the problem of finding the minimum distance of linear codes. Section \ref{sec:algs} describes both the baseline and proposed algorithms used to solve the problem, and Section \ref{Section3} shows the experiments performed. Section \ref{sec:magma} extends the experimentation for specific cases for which we have found new upper bounds for the code distance, and Section \ref{Section4} contains the conclusions.

\section{Background}\label{sec2}

The mathematical background of this paper relies in several aspects of finite fields, linear algebra and coding theory, which are revised in this section to fix the notation and for self-completeness.

\subsection{Finite fields}\label{finitefields}

Finite fields, also called Galois fields \cite{Lidl/Niederreiter:1996}, are natural generalizations of prime fields \(\mathbb{F}_p = \mathbb{Z}/(p)\), the ring of integers modulo a prime number \(p\). The elements of a finite field \(\mathbb{F}_q\) of \(q = p^r\) elements can be seen as polynomials over \(\mathbb{F}_p\) of degree less than \(r\), and the arithmetic is performed modulo an irreducible polynomial of degree \(r\). For example, the field \(\mathbb{F}_8\) with eight elements can be presented as polynomials in \(\mathbb{F}_2[a]\) of degree less than \(3\), and in this paper the field operations are done modulo the polynomial \(a^3 + a + 1\). The sum is the usual sum of polynomials, whilst the product is obtained by performing the usual product over polynomials modulo \(a^3 + a + 1\). For example, the product of $(a^2 + 1)$ and $(a^2 + a + 1)$ in \(\mathbb{F}_8\) is calculated as follows:
	\[
	(a^2 + 1)(a^2 + a + 1) = a^4 + a^3 + a + 1 \bmod{a^3 + a + 1} = a^2 + a,
	\]
	or in bit notation
	\[
	101 \cdot 111 = 110.
	\]
Non zero elements of a finite field form a cyclic group, i.e. they can be represented as powers of an element (called primitive). Primitive elements allow a more efficient multiplication using a tabular data structure. The following table shows that \(a \in \mathbb{F}_8\) is a primitive element and all elements of $\mathbb{F}_8$ but the zero element can be calculated as powers of $a$. 
	\[
	\begin{array}{|r|r|r|}
	\hline
	\text{Power} & \text{Polynomial} & \text{Bits} \\
	\hline
	a^0 & 1 & 001 \\
	a^1 & a & 010 \\
	a^2 & a^2 & 100 \\
	a^3 & a+1 & 011 \\
	a^4 & a^2 + a & 110 \\
	a^5 & a^2 + a + 1 & 111 \\
	a^6 & a^2 + 1 & 101 \\
	a^7 & 1 & 001 \\
	\hline
	\end{array}
	\]
	The product can be computed using the corresponding primitive representation and that \(a^7 = 1\). For instance,
	\[
	(a^2 + 1)(a^2 + a + 1) = a^6 a^5 = a^{11 \bmod{7}} = a^4 = a^2 + a.
	\]

\subsection{Reduced row echelon form}\label{linearalgebra}

Let \(\mathbb{F}\) be a field and let \(\mathbb{F}^{k \times n}\) denote the set of \(k \times n\) matrices over \(\mathbb{F}\). Recall that a matrix \(H \in \mathbb{F}^{k \times n}\) is in reduced row echelon form (RREF for short), see \cite[Chapter 5]{Liesen/Mehrmann:2015}, if all zero rows are below the non zero rows, the pivot (i.e. the first non zero element) in a non zero row is \(1\), the pivot of a non zero row is strictly to the right of the pivot of the row above, and each column containing a pivot has zeros everywhere else. The notion of echelon form can be found in almost every Linear Algebra book. We suggest \cite[Chapter 5]{Liesen/Mehrmann:2015} for additional reading. For completeness we recall the following basic properties about the RREF because they are essential for the mathematical basis of our proposal: 
	\begin{enumerate}
	\item Two matrices \(M,N \in \mathbb{F}^{k \times n}\) are said to be row equivalent if there exists a non singular matrix \(S \in \mathbb{F}^{k \times k}\) such that \(M = SN\), i.e. each row of $M$ is a linear combination of rows of $N$, and viceversa.
		\item Two matrices \(M,N \in \mathbb{F}^{k \times n}\) are row equivalent if and only if their rows generate the same vector subspace of \(\mathbb{F}^n\).
		\item Each \(M \in \mathbb{F}^{k \times n}\) is row equivalent to a matrix \(H \in \mathbb{F}^{k \times n}\) in RREF. 
		\item If \(H,H' \in \mathbb{F}^{k \times n}\) are in RREF and row equivalent, then \(H = H'\).
		\item If \(v \in \mathbb{F}^n \setminus \{0\}\) belongs to the vector subspace generated by the rows of \(M \in \mathbb{F}^{k \times n}\), then \(M\) is row equivalent to a matrix \(\left(\begin{array}{c} M' \\ \hline v \end{array} \right)\), where \(M' \in \mathbb{F}^{(k-1) \times n}\). 
	\end{enumerate}

As a consequence, any matrix is row equivalent to a unique matrix in RREF. Moreover, the non zero rows of the RREF are a basis for this subspace.

\subsection{Linear error correcting codes}\label{codingtheory}

In the most general setting, an error correcting code is simply a set $\mathcal{C}$,  whose elements are called codewords, aiming to increase the reliability of the communication through a noisy channel. Given a set $\mathcal{M}$ of messages, each message $m\in \mathcal{M}$ is one-to-one assigned to a codeword $c$ which is transmitted. If $c$ is corrupted by the noise, the receiver could get an element $y\notin \mathcal{C}$, detecting an error in the communication. If an eventual decoding algorithm was available, the receiver would correct $y$ to the codeword $c$ and recover the original message $m$. 

In practice, as commented in the Introduction, a rich algebraic structure on a code usually leads to a better knowledge of its basic parameters as well as the design of faster encoding and decoding algorithms. Let \(\mathbb{F}_q\) be the field with \(q\) elements. An \([n,k]_q\)-linear code \(\mathcal{C}\) is a \(k\)-dimensional vector subspace of \(\mathbb{F}_q^n\).  Linear codes can be provided by generating matrices or by parity check matrices. Concretely, a generating matrix for \(\mathcal{C}\) is any $k\times n$ matrix \(G \in \mathbb{F}_q^{k \times n}\) such that 
\[\mathcal{C} = \left\{mG ~;~ m \in \mathbb{F}_q^{k}\right\},\] 
i.e. each codeword can be obtained in a unique way multiplying a vector message to be encoded \(m \in \mathbb{F}^k\) by the matrix \(G\). Since \(\mathcal{C}\) has dimension \(k\), it follows that the rows of \(G\) form a basis of the code.  For clarity, we sometimes remark the dimensions of the matrix by using subindices, so that the matrix $G$ is denoted as $G_{k,n}$. A parity check matrix is any matrix \(H \in \mathbb{F}_q^{n \times (n-k)}\) such that
\[
\mathcal{C} = \left\{v \in \mathbb{F}_q^n ~;~ v H = 0\right\}.
\]
Parity check matrices are quite useful since they are linked to a general decoding strategy called syndrome decoding.

One of the most important parameters of a code is its distance. Let $x\in \mathbb{F}_q^n$, we denote by \(\operatorname{w}(x)\) the number of non zero entries in \(x\), i.e. the Hamming weight of \(x\). This is a weight function which has a corresponding associated distance, \(\operatorname{d}(x,y) = \operatorname{w}(x-y)\), measuring the number of different coordinates of \(x\) and \(y\). The (minimum Hamming) distance $\operatorname{d}(C)$ of a linear code \(\mathcal{C}\) is defined as
\[
\operatorname{d}(C) = \min\left\{\operatorname{w}(c)~;~c\in\mathcal{C}\setminus\{0\}\right\},
\]
or equivalently, as 
\[
\operatorname{d}(C) = \min\left\{\operatorname{d}(c,c')~;~c,c'\in\mathcal{C}, c \neq c' \right\}.
\]
Then, if a lower bound \(d\) of the distance is known, we refer to \(\mathcal{C}\) as an \([n,k,d]_q\)-linear code. 

\begin{example}
Let \(\mathcal{C}\) be the linear code over the field with eight elements $\mathbb{F}_8$ defined by the generating matrix
\[
G_{3,6} = \begin{pmatrix}
1 & 0 & 0 & a^5 & a^{2} & a^4 \\
0 & 1 & 0 & a & a^5 & 1 \\
0 & 0 & 1 & 0 & a^5 & a^5
\end{pmatrix}.
\]
The rows of $G$ form a basis of $\mathcal{C}$, so that  $\mathcal{C}$ is a $[3,6]_8$-linear code. In this setting, the message $m=(1,a,1)\in \mathbb{F}^3_8$ is encoded to the codeword
$$m\,G_{3,6}=(1,a,1) \begin{pmatrix}
1 & 0 & 0 & a^5 & a^{2} & a^4 \\
0 & 1 & 0 & a & a^5 & 1 \\
0 & 0 & 1 & 0 & a^5 & a^5
\end{pmatrix}=(1,a,1,a^3,a^4,a^3)\in \mathbb{F}^6_3.$$
The reader may check that 
$$(1,a^4,0,0,0,0)= 1\cdot (1,0,0,a^5,a^2,a^4)+a^4\cdot (0,1,0,a,a^5,1)$$ is a codeword of minimum weight. Hence \(\mathcal{C}\) is a \([6,3,2]_8\)-linear code. Observe that no row of \(G\), which is in RREF, has Hamming weight \(2\). 
\end{example}

The distance of a code is linked to its correction capability. Concretely, if $d$ is the minimum distance, \(t = \left\lfloor \frac{d-1}{2} \right\rfloor\) is the so called error-correcting capability. The balls centered in the codewords with radius \(t\) are pairwise disjoint.  Thus, for any received word of the form $y = c + e$, where \(c \in \mathcal{C}\) and $e$ is an error  vector of weight less than $t$, \(c\) is the closest codeword to \(y\), see Figure \ref{deccapability}. 
\begin{figure}[ht]
$$\begin{tikzpicture}[scale=1, main node/.style={circle, draw, font = \bfseries}]
\node at (0,0) (0) {$c'$};
\node at (4,1) (0') {$c$};
\node at (0,-0.3) (1) {$\circ$};
\node at (4,0.7) (1') {$\circ$};
\draw[<->, dashed] (1) -- (1');
\draw (0,-0.3) circle (1.6);
\draw (4,0.7) circle (1.6);
\node at (2,0.5) {$d$};
\node at (-1.6,-0.3) (2) {};
\draw[<->,dashed] (2) -- (1);
\node at (5,0) (3) {$\circ$};
\node at (5.1,0.3) (5) {$y$};
\node at (-0.8,0) (4) {$t$};
\draw[->] (1') -- (3);
\node at (4.3,0.2) (6) {$e$};
\end{tikzpicture}$$
\caption{Balls centered at two codewords}\label{deccapability}
\end{figure}
A nearest neighbor decoding algorithm finds the closest codeword of a given word. These results can be checked in \cite[\S 1.11]{Huffmann/Pless:2003}. 

The latter is an important property regarding our approach in this paper since it makes closer the minimum distance computation and the decoding process. Assume we have a general algorithm \(\operatorname{mincw}()\) for computing the distance of a linear code, that is, given a generating matrix  of a linear code, its output is a non zero codeword of minimum weight. Let then \(\mathcal{C}\) be an \([n,k,d]\)--code with generating matrix \(G\). If $c\in \mathcal{C}$ is transmitted and \(y = c + e\) is received, where $e$ is an error vector of weight lower or equal that \(t = \left\lfloor \frac{d-1}{2} \right\rfloor\), it is easy to see that \(e\) is the only codeword of minimum weight of the linear code \(\mathcal{C} + \langle y \rangle\), where $\langle y \rangle$ denotes the vector space generaed by $y$. Indeed, suppose that \(e' = c' + y\) with \(c' \in \mathcal{C}\) and \(\operatorname{w}(e') \leq \operatorname{w}(e) \leq t\), then \(e - e' = y - c + c' - y = c' - c \in \mathcal{C}\) and \(\operatorname{w}(e - e') \leq \operatorname{w}(e) + \operatorname{w}(e') \leq t + t < d\). Since \(d\) is the minimum distance of \(\mathcal{C}\), necessarily $e=e'$. The block matrix \(G' = \left(\begin{array}{c} G \\ \hline y \end{array}\right)\) is a generator matrix for \(\mathcal{C} + \langle y \rangle\), it follows \(\operatorname{mincw}(G') = e\). Therefore such an algorithm \(\operatorname{mincw}()\) yields a general decoding algorithm up to the correction capability of the code, see Algorithm \ref{gendec}.
\floatname{algorithm}{Algorithm}
\begin{algorithm}[ht]
    \caption{General decoding algorithm}\label{gendec}
    \begin{algorithmic}[1]
    \REQUIRE A received word $y$. \textbf{Requirements:} a minimal weight codeword computation algorithm  \(\operatorname{mincw}()\) and a generating matrix $G$ of the linear code.
    \ENSURE A codeword $c$.
       \STATE $G'\gets \left(\begin{array}{c} G \\ \hline y \end{array}\right)$.
        \STATE $e\gets \operatorname{mincw}(G')$.
        \RETURN  $y-e$.
    \end{algorithmic}
\end{algorithm}

The computation of the minimum distance, and a corresponding non zero codeword of minimal weight, is therefore a major issue for a linear code. As shown in \cite[Theorem 5]{Vardy1997}, finding a non zero codeword of minimal weight of a linear code is an NP-hard problem. Hence, unless P = NP, we cannot expect to design an efficient algorithm to compute the distance of an arbitrary linear code. This result on complexity theory has two consequences. On the one hand, to overcome this limitation, most codes are constructed with additional algebraic structures in order to fix a known lower bound of their distances and correction capability. Examples can be found in \cite{Bose/Chaudhuri:1960,Hocquenghem:1959,Hartmann/Tzeng:1972} for cyclic codes and \cite{Gomez/Lobillo/Navarro/Neri:2018} for skew cyclic codes. However, in this paper we assume linear codes that are not constrained by any additional structure, and therefore our approach can be used for any family of linear codes. On the other hand, the NP-hardness of the general minimum distance computation is the strongest tool in code-based cryptography, a promise branch of the so-called Post-quantum Cryptography, that comprises the study of cryptographic algorithms thought to be secure against an attack by a quantum computer. Hence any algorithm which is able to compute minimal weight codewords may lead to a potential attack to those cryptosystems. We recommend \cite{OverbeckSendrier2009} as a good source in code-based cryptography. The reader may also consult the competition promoted by the NIST (National Institute of Standards and Technology), regarding to standardize one or more quantum-resistant public-key cryptographic algorithms, where Classic McEliece proposal in \url{https://classic.mceliece.org/}, a code-based cryptosystem, is one of the finalists.

\section{Permutation-based scheme}\label{sec:permRep}

Let \(0 < k \leq n\) be two non negative integers, and \(\mathbb{F}_q\) the field with \(q\) elements. We denote by \(\mathcal{S}_n\) the set of permutations of \(n\) symbols.  Two \([n,k]_q\)-linear codes \(\mathcal{C}_1\) y \(\mathcal{C}_2\) are said to be permutation equivalent if they are equal up to a fixed permutation on the codeword coordinates, that is, if there exists a permutation \(x \in \mathcal{S}_n\) such that \((c_0, c_1, \dots, c_{n-1}) \in \mathcal{C}_1\) if and only if \((c_{x(0)}, c_{x(1)}, \dots, c_{x(n-1)}) \in \mathcal{C}_2\). It is easy to see that \(G\) is a generating matrix for \(\mathcal{C}_1\) if and only if \(GP\) is a generating matrix for \(\mathcal{C}_2\), where \(P\) is the permutation matrix associated to \(x\). Although permutation equivalent codes could be different, they share the same minimum distance, since a permutation of its components does not modify the Hamming weight of a vector.

Our new perspective to compute the minimum distance relies in the the following result, whose proof uses the properties of the RREF stated in subsection \ref{linearalgebra}. See also \cite{GLN2019}.

\begin{thm}\label{mathmain}
Let $G$ be a $k \times n$ generating matrix of a $[n,k]_q$-linear code $\mathcal{C}$ over the finite field $\mathbb{F}_q$. There exists a permutation $x \in \mathcal{S}_n$ such that the RREF, $R$, of $G P_x$, where $P_x$ is the permutation matrix of $x$, satisfies that the Hamming weight of some of its rows reaches the minimum distance of $\mathcal{C}$. Consequently, if $b$ is a row of $R$ verifying such property, then $bP_x^{-1}$ is a non zero codeword of $\mathcal{C}$ with minimal weight.
\end{thm}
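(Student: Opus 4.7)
My plan is to fix a minimum weight codeword, engineer a column permutation that forces a scalar multiple of it to appear as a row of the resulting RREF, and then read that row off. Let $c \in \mathcal{C}$ be a non-zero codeword with $\operatorname{w}(c) = d = \operatorname{d}(\mathcal{C})$, and write $S = \{i : c_i \neq 0\}$ and $T = \{1,\dots,n\} \setminus S$, so $|T| = n-d$. Since $c$ lies in the row span of $G$, property~(5) of Subsection~\ref{linearalgebra} gives a row-equivalent generating matrix $G'$ whose first $k-1$ rows form a matrix $M'$ and whose last row equals $c$; the rows of $G'$, which I denote $v_1,\dots,v_{k-1},c$, are then a basis of $\mathcal{C}$.

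The main obstacle is a rank lemma: the $(k-1)\times(n-d)$ submatrix $M'|_T$ obtained by restricting $M'$ to the columns in $T$ has full row rank $k-1$. I would argue by contradiction. If not, some non-trivial combination $u = \sum_i \alpha_i v_i$ vanishes on $T$, yielding a non-zero codeword $u$ with $\operatorname{supp}(u) \subseteq S$. Minimality of $d$ together with $|S|=d$ forces $\operatorname{supp}(u) = S$; for any $j \in S$, the codeword $u - (u_j/c_j)c$ is then supported in $S\setminus\{j\}$, has weight at most $d-1$, and so must vanish by minimality of $d$. This would make $u$ a scalar multiple of $c$, contradicting the linear independence of $\{v_1,\dots,v_{k-1},c\}$. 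As a by-product the argument also confirms the Singleton bound $n-d \geq k-1$, so I can pick $T_0 \subseteq T$ with $|T_0| = k-1$ such that $M'|_{T_0}$ is a non-singular $(k-1)\times(k-1)$ matrix.

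Next, I would choose any permutation $x \in \mathcal{S}_n$ sending $T_0$ bijectively onto $\{1,\dots,k-1\}$ and let $P_x$ be its permutation matrix. The first $k-1$ columns of $G'P_x$ then have rank exactly $k-1$: the bottom row contributes zeros because $T_0 \subseteq T$, while the top $(k-1)\times(k-1)$ block is precisely $M'|_{T_0}$. Row-equivalence preserves the rank of any fixed set of columns, so the RREF $R$ of $GP_x$ (which coincides with the RREF of $G'P_x$ by uniqueness) also has its first $k-1$ columns of rank $k-1$. In an RREF this means those columns contain exactly the pivots of the first $k-1$ rows, and in particular the last row of $R$ has zeros in positions $1,\dots,k-1$.

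To conclude, I identify that last row. Since it lies in the row span of $GP_x$, it equals $\tilde w = wP_x$ for some $w \in \mathcal{C}$, which I expand as $w = \alpha_1 v_1 + \cdots + \alpha_{k-1} v_{k-1} + \alpha_k c$ in the chosen basis. The condition that $\tilde w$ vanishes on the first $k-1$ positions is equivalent to $w$ vanishing on $T_0$; because $c$ vanishes on $T_0$ and $M'|_{T_0}$ is non-singular, this linear system forces $\alpha_1 = \cdots = \alpha_{k-1} = 0$ and hence $w = \alpha_k c$. The last row of $R$ being non-zero gives $\alpha_k \neq 0$, so $\operatorname{w}(\tilde w) = \operatorname{w}(w) = d$, and $\tilde w P_x^{-1} = w$ is the desired non-zero codeword of minimum weight.
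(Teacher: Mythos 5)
Your proof is correct, and while it shares the paper's overall strategy (fix a minimum-weight codeword $c$, make it the last row of a row-equivalent generating matrix via property (5), and choose a permutation from its support so that the last row of the resulting RREF is forced to be a scalar multiple of $cP_x$), the key technical step is genuinely different. The paper permutes the \emph{support} of $c$ to the last $d$ positions, puts the complementary block $G_1P_x$ in RREF, and then splits into cases on the pivot position $i_0$ of its last row: if $i_0<n-d+1$ one is done, and otherwise two linearly independent vectors supported on the last $d$ coordinates yield a combination of weight $<d$, a contradiction. You instead prove an upfront rank lemma — the restriction of the other $k-1$ basis rows to the complement $T$ of the support has full rank $k-1$, because any nonzero codeword supported inside $\operatorname{supp}(c)$ must be a scalar multiple of $c$ — which both eliminates the paper's bad case and gives the Singleton bound as a by-product; you then move a nonsingular $(k-1)\times(k-1)$ minor $T_0\subseteq T$ to the \emph{front}, which pins down the pivot structure of the RREF completely (pivots of the first $k-1$ rows in columns $1,\dots,k-1$) and identifies the last row exactly as $\alpha_k\, cP_x$ by solving the linear system on $T_0$. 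Both arguments hinge on the same minimality trick (subtracting a multiple of $c$ to drop below weight $d$), but your route avoids the case analysis and is more explicit about how the final RREF row is identified, at the cost of a more refined choice of permutation; the paper's cruder permutation (depending only on the support) is what later underlies its $\binom{n}{d}^{-1}$ random-search probability estimate, though that is not part of the theorem itself.
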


\begin{proof}
Let \(d\) be the minimum distance of \(\mathcal{C}\) and let \(a \in \mathcal{C}\) be a codeword such that \(\operatorname{w}(a) = d\). Since \(a\) is non zero, there exists a non singular matrix $A \in \mathbb{F}_q^{k \times k}$ such that 
\[
A G = \left(\begin{array}{c} G_1 \\ \hline a \end{array}\right) 
\]
for some \(G_1 \in \mathbb{F}_q^{(k-1) \times n}\). Since \(\operatorname{w}(a) = d\), there exists a permutation \(x \in \mathcal{S}_n\) moving the non zero entries of \(a\) to the last positions, i.e.
\[
a P_x = (0, \dots, 0, a_1, \dots, a_d),
\]
where \(a_1, \dots, a_d \in \mathbb{F}_q \setminus \{0\}\). Hence 
\[
AGP_x = \left(\begin{array}{c} G_1 P_x \\ \hline a P_x \end{array}\right) = \left( \begin{array}{c} G_1 P_x  \\\hline
\begin{array}{c|c} 0 \cdots 0 & a_1 \cdots a_d \end{array} \end{array} \right).
\]
Now, let \(A' \in \mathbb{F}_q^{(k-1) \times (k-1)}\) invertible such that \(A' G_1 P_x = H'\) is in RREF. Hence,
\begin{equation}\label{eq:rref}
\left( \begin{array}{c|c} A' & 0\\ \hline 0 & 1 \end{array} \right) AGP_x = \left( \begin{array}{c|c} A' & 0\\ \hline 0 & 1 \end{array} \right) \left( \begin{array}{c} G_1 P_x  \\\hline
\begin{array}{c|c} 0 \cdots 0 & a_1 \cdots a_d \end{array} \end{array} \right) = \left( \begin{array}{c} H'  \\\hline
\begin{array}{c|c} 0 \cdots 0 & a_1 \cdots a_d \end{array} \end{array} \right)
\end{equation}
Since $G_1$ has rank $k-1$, the last row of $H'$ is nonzero. Suposse that the pivot of this row is in the $i_0$-th column. If $i_0<n-d+1$, then the last row of \eqref{eq:rref} is the last row of the RREF of $GP_x$ up to non zero scalar multiplication, and we are done. Otherwise, the last two rows of \eqref{eq:rref} are linearly independent and their nonzero coordinates are placed at the last $d$ coordinates. Hence, there exists a linear combination of both whose hamming weight is lower than $d$, a contradiction. The last statement is straightforward.
\end{proof}

By Theorem, finding the minimum distance of an \([n,k]_q\)-linear code is reduced to find the minimum of the map $\mathfrak{d}:\mathcal{S}_n\to \mathbb{N}$ defined by
\begin{equation}\label{eq:fitness}
\mathfrak{d}(x)=\min \left\{\operatorname{w}(b) ~|~ b \text{ is a row of the RREF of } GP_x \right\}
\end{equation}
for any $x\in \mathcal{S}_n$, where $G$ is a generating matrix of the code and $P_x$ represents the permutation matrix of  $x$.
This encoding is then invariant with respect to the base field. Obviously, the computation of $\mathfrak{d}(x)$, for some permutation \(x\), does depend on $q$ and $n$. However the row reduction can be performed in \(\mathcal{O}(n^3)\) operations in $\mathbb{F}_q$ using the classical scholar algorithm. In fact, using fast matrix multiplication and the so called striped matrix reduction, the complexity could be reduced to \(\mathcal{O}(n^{\log_27})\) and \(\mathcal{O}(n^3/\log n)\), see \cite{BunchHopcroft1974, AndrenHellstromMarkstrom2007}, respectively.

Anyway, there are just a few permutations which provide the minimum of \(\mathfrak{d}\). Concretely, assume there is only one codeword which achieves the minimum weight \(d\) of \(\mathcal{C}\) up to scalar multiplication. Then the probability of finding the minimum of \(\mathfrak{d}\) by a random search is 
\[
\frac{d! (n-d)!}{n!} = \binom{n}{d}^{-1},
\] 
since, given a permutation \(x \in \mathcal{S}_n\) such that \(a P_x = (0, \dots, 0, a_1, \dots, a_d)\) for a codeword \(a \in \mathcal{C}\) of minimal weight \(\operatorname{w}(a) = d\), any other permutation of the last \(d\) and the first \(n-d\) positions of \(a P_x\) also yields a RREF whose last row reaches the minimum weight. In the general case we obtain a lower bound of the probability of finding the minimum of \(\mathfrak{d}\) by random search. There are codes with just one codeword of minimum weight (up to scalar multiplication), hence this lower bound can be achieved. 

\begin{example} \label{ex:rep}
Let us illustrate the proposed scheme by means of a toy example. Let $\mathcal{C}$ be the $[8,4]_4$-linear code over the field with four elements $\mathbb{F}_4=\{0,1,a,a^2=a+1\}$ with generating matrix
$$G_{4, 8}=\left(\begin{array}{cccccccc}
1 & 0 & a + 1 & a + 1 & a + 1 & 0 & 0 & a \\
a + 1 & 0 & 0 & a + 1 & a & a + 1 & 1 & a + 1 \\
1 & a + 1 & 1 & a + 1 & a & 0 & a & 0 \\
0 & 0 & 0 & a & a & a & a + 1 & a
\end{array}\right).$$
So that our ultimate aim to find the minimum of the above-described map $\mathfrak{d}:\mathcal{S}_8\to \mathbb{N}$. Let $x\in \mathcal{S}_8$ be the permutation given by  $x(0)= 1$, $x(1)=0$, $x(2)=3$, $x(3)=2$, $x(4)=5$, $x(5)=4$, $x(6)=7$ and $x(7)=6$, that is, $x$ is the composition of the cycles $(0,1)$, $(2,3)$, $(4,5)$ and $(6,7)$. We apply the permutation $x$ to the columns of $G_{4, 8}$ in order to obtain a generating matrix $G^x_{4,8}$ of a permutation equivalent code $\mathcal{C}_x$, see Figure \ref{appper}.

\begin{figure}[ht]
$$\begin{tikzpicture}[scale=1, main node/.style={circle, draw, font = \bfseries}]
\node at (0,3.6) {$G_{4, 8}=\left(\begin{array}{cccccccc}
1 & 0 & a + 1 & a + 1 & a + 1 & 0 & 0 & a \\
a + 1 & 0 & 0 & a + 1 & a & a + 1 & 1 & a + 1 \\
1 & a + 1 & 1 & a + 1 & a & 0 & a & 0 \\
0 & 0 & 0 & a & a & a & a + 1 & a
\end{array}\right)$};
\node[main node] at (-3.29,2.4) (0) {\scriptsize 0};
\node[main node] at (-2.21,2.4) (1) {\scriptsize 1};
\node[main node] at (-1.12,2.4) (2) {\scriptsize 2};
\node[main node] at (-0.04,2.4) (3) {\scriptsize 3};
\node[main node] at (1.03,2.4) (4) {\scriptsize 4};
\node[main node] at (2.1,2.4) (5) {\scriptsize 5};
\node[main node] at (3.2,2.4) (6) {\scriptsize 6};
\node[main node] at (4.27,2.4) (7) {\scriptsize 7};

\node[main node] at (-3.29,1) (0') {\scriptsize 0};
\node[main node] at (-2.21,1) (1')  {\scriptsize 1};
\node[main node] at (-1.12,1) (2') {\scriptsize 2};
\node[main node] at (-0.04,1) (3') {\scriptsize 3};
\node[main node] at (1.03,1) (4') {\scriptsize 4};
\node[main node] at (2.1,1) (5') {\scriptsize 5};
\node[main node] at (3.2,1) (6') {\scriptsize 6};
\node[main node] at (4.27,1) (7') {\scriptsize 7};
\node at (0,-0.3) {$G^x_{4, 8}=\left(\begin{array}{cccccccc}
0 & 1 & a + 1 & a + 1 & 0 & a + 1 & a & 0 \\
0 & a + 1 & a + 1 & 0 & a + 1 & a & a + 1 & 1 \\
a + 1 & 1 & a + 1 & 1 & 0 & a & 0 & a \\
0 & 0 & a & 0 & a & a & a & a + 1
\end{array}\right)$};
\node at (-4,1.75) {$x$};
\draw[->] (0) -- (1');
\draw[->] (1) -- (0');
\draw[->] (2) -- (3');
\draw[->] (3) -- (2');
\draw[->] (4) -- (5');
\draw[->] (5) -- (4');
\draw[->] (6) -- (7');
\draw[->] (7) -- (6');
\end{tikzpicture}$$
\caption{Construction of the permuted generating matrix}\label{appper}
\end{figure}

Actually, $G^x_{4, 8}$ could be obtained by multiplying $G_{4, 8}$ by the right by the permutation matrix
$$
P_x=\left(\begin{array}{cccccccc}
0 & 1 & 0 & 0 & 0 & 0 & 0 & 0 \\
1 & 0 & 0 & 0 & 0 & 0 & 0 & 0 \\
0 & 0 & 0 & 1 & 0 & 0 & 0 & 0 \\
0 & 0 & 1 & 0 & 0 & 0 & 0 & 0 \\
0 & 0 & 0 & 0 & 0 & 1 & 0 & 0 \\
0 & 0 & 0 & 0 & 1 & 0 & 0 & 0 \\
0 & 0 & 0 & 0 & 0 & 0 & 0 & 1 \\
0 & 0 & 0 & 0 & 0 & 0 & 1 & 0
\end{array}\right).
$$
Now, we compute the RREF of the matrix $G^x_{4, 8}$, and the Hamming weight of each row, obtaining
$$\left(\begin{array}{cccccccc}
1 & 0 & 0 & 0 & a + 1 & 0 & a & a \\
0 & 1 & 0 & 0 & 0 & a & 0 & 0 \\
0 & 0 & 1 & 0 & 1 & 1 & 1 & a \\
0 & 0 & 0 & 1 & 1 & a + 1 & a & a
\end{array}\right) 
\begin{array}{cc}
\rightarrow & 4 \\
\rightarrow & 2 \\
\rightarrow & 5 \\
\rightarrow & 5
\end{array}
$$
Observe that the second row has minimal Hamming weight among all the rows, thus $\mathfrak{d}(x)=2$. Actually, the distance of $\mathcal{C}$ is 2, so $x$ reaches the minimum of $\mathfrak{d}$.
\end{example}

\section{Algorithms}\label{sec:algs}

The theorem %\ref{mathmain}
 of Section \ref{sec:permRep} has an immediate consequence: It allows us to change the classic problem statement of finding the minimum distance, so that the search of a codeword with minimum weight can be stated as finding a suitable permutation of the columns of a generating matrix. In this section we take advantages of this perspective and design a Generational Genetic Algorithm (GGA) and a Cross generational elitist selection, Heterogeneous recombination, and Cataclysmic mutation algorithm (CHC) for solving the problem. We first include on them the traditional (discrete representation) problem statement as baseline methods for comparison, and afterwards we adapt these algorithms to the new permutation representation scheme. Thus, four algorithms are performed in the experimentation: \textit{GGA-Discrete} and \textit{GGA-Order}, which implement a generational genetic algorithm with traditional discrete representation and with permutation representation, respectively; and \textit{CHC-Discrete} and \textit{CHC-Order} with also discrete and permutation representation, respectively. The aim of considering the CHC algorithms is to study the balance between diversity and convergence in the GGA algorithms, mainly for the traditional approach, to show the problems regarding high selective pressure towards the trivial non-valid zero solution and how they are trapped into local optima. Although CHC is an evolutionary algorithm whose initial version was proposed for binary encoding \cite{eshelman1991}, there have been proposals to adapt the algorithm for real and permutation encoding in \cite{eshelman93, cordon2006, chcdiscrete1}. We rely on these papers to build the problem-specific components of the algorithms. A preliminary version of the GGA-Order algorithm was previously presented as a conference poster \cite{GLN2019} and published as an abstract.

Although both GGA and CHC algorithms are extensively known, Algorithms \ref{alg:GGA} and \ref{alg:chc} give a generic description of their execution steps, respectively, to make the manuscript self-contained and ease the reproducibility of the experiments. We now first describe the common components to both discrete and order representation and we particularize each case in the forthcoming subsections.

\subsection{Common components}

The GGA's follow the procedure described in Algorithm \ref{alg:GGA}. It is the classic genetic algorithm with elitism and reinitialization \cite{BackFogelMichalewicz1997}, and it also contains a reconstruction procedure when a non-valid solution is generated by crossover or mutation (lines 12--13 in Algorithm \ref{alg:GGA}). This reconstruction procedure substitutes the non-valid solutions with valid random solutions to encourage exploration. The non-valid solution in the \textit{GGA-Discrete} algorithm is the zero solution $(0, 0,\ldots , 0)$. The reconstruction procedure is not included in the \textit{GGA-Order} implementation, since all possible solutions generated are permutations, which are valid in this case. 

\floatname{algorithm}{Algorithm}
\begin{algorithm}[ht]
    \caption{Generational Genetic Algorithm (GGA)}\label{alg:GGA}
    \begin{algorithmic}[1]
        \REQUIRE $N = \text{Even natural number with the size of the population}$
        \REQUIRE $p_c = \text{Crossover probability}$
        \REQUIRE $MaxReinit = \text{Number of solution evaluations with no fitness improvement before reinitialization}$
        \STATE Set $t$ = 0
        \STATE Initialize the population $P(t)$ with $N$ random valid solutions
        \STATE Evaluate solutions in $P(t)$
        \WHILE {stopping criterion not satisfied}
            \STATE Set $P(t+1)$ $= \emptyset$
            \STATE Set $parents(1..N)$ = Selection of $N$ solutions in $P(t)$ with binary tournament selection
            \FOR{i in 0..$N/2-1$}
                \IF{random number from uniform distibution in \([0,1]\) is less than $p_c$}
                \STATE Set $c_1, c_2$ = solutions generated from crossover over $parents(2i+1)$ and $parents(2i+2)$
                \ELSE
                \STATE Set $c_1$ = mutation of $parent(2i+1)$, $c_2$ = mutation of $parent(2i+2)$
                \ENDIF
                \IF{$c_1$ (resp. $c_2$) is not valid}
                \STATE replace $c_1$ (resp. $c_2$) with a random valid solution
                \ENDIF
                \STATE Update $P(t+1)= P(t+1) \bigcup \{c_1,c_2\}$
            \ENDFOR
            \STATE Evaluate solutions in $P(t+1)$
            \IF{no solution fitness in $P(t+1)$ is better or equivalent to the best solution fitness in $P(t)$}
            \STATE Replace worst solution in $P(t+1)$ with the best solution in $P(t)$
            \ENDIF
            \IF{$MaxReinit$ solutions were evaluated with no improvement regarding the best solution in $P(t+1)$}
			\STATE Replace solutions in $P(t+1)$ with $N-1$ random solutions and the best solution in $P(t)$
			\ENDIF
            \STATE Update $t= t+1$
        \ENDWHILE
        \RETURN Best solution in $P(t)$
    \end{algorithmic}
\end{algorithm}

The algorithm works as follows: A population $P(t)$ is initialized and evaluated with $N$ random solutions at iteration $t=0$. Then, the main loop of the algorithm is executed until a stopping condition is met. In this paper, the stopping criterion is to evaluate a maximum number of solutions so that the algorithms can be compared in performance. In order to test the algorithms, an additional stopping criterion has been included when a solution in the population reaches a previously known lower bound of the distance. 

The loop of the algorithms start by selecting $N$ parents according to the binary tournament selection operator \cite{Blickle97acomparison}. A crossover operator is applied to two parents to generate a pair of new solutions with probability $p_c$. If they are not combined, the mutation operator acts on the parents to generate two mutated solutions. All the $N$ new solutions generated by either crossover or mutation form the population at the next iteration $P(t+1)$. After that, in the discrete case, we check if all solutions in $P(t+1)$ are valid. Finally, the solutions in $P(t+1)$ are evaluated. An elitism component is included before the next iteration starts: If no solution in $P(t+1)$ has a fitness equal or better than the best in $P(t)$, then then worst in $P(t+1)$ is replaced with the best in $P(t)$. Also, we include a reinitialization of $P(t+1)$ with $N$ new random solutions after $MaxReinit$ solution evaluations with no improvement in the fitness of the best solution found.

The CHC algorithm is an evolutionary algorithm whose initial version was proposed for binary encoding \cite{eshelman1991}. This algorithm holds a balance between genotypic diversity in the solutions of the population, and convergence to local optima. It is based on four main components: elitist selection, the HUX solution recombination operator, an incest prevention check to avoid the recombination of similar solutions, and a population reinitialization method when a local optimum is found. Later versions of this algorithm are proposed for real and permutation encoding in \cite{eshelman93, cordon2006, chcdiscrete1}.  The adaptation of this algorithm, described in Algorithm \ref{alg:chc}, it is mainly inspired in the proposals of \cite{cordon2006, chcdiscrete1}. 
\floatname{algorithm}{Algorithm}
\begin{algorithm}[ht]
    \caption{CHC Algorithm}\label{alg:chc}
    \begin{algorithmic}[1]
        \REQUIRE $N = \text{Even natural number with the size of the population}$
        \REQUIRE $\tau = \text{Crossover threshold update rate}$
        \STATE Set $t$ = 0
        \STATE Initialize the population $P(t)$ with $N$ random valid solutions
        \STATE Evaluate solutions in $P(t)$
        \STATE Set $d = \textit{Average distance of solutions in $P(t)$}$
        \STATE Set $dec = \tau \cdot \textit{Maximum distance of solutions in $P(t)$}$
        \WHILE{stopping criterion is not satisfied}
        \STATE Set $C(t)= \emptyset$
        \STATE Set $parents(1, \dots, N) = \text{random shuffle of solutions in $P(t)$}$
        \FOR{$i = 0, \dots, N/2-1$}
        \IF{$distance(parents(2i+1), parents(2i+2)) < d$}
        \STATE Set $c_1, c_2 = \text{solutions generated from crossover over $parents(2i+1)$ and $parents(2i+2)$}$
        \STATE Update $C(t)= C(t) \cup \{c_1, c_2\}$
        \ENDIF
        \ENDFOR
        \STATE Evaluate Solutions in $C(t)$
        \STATE Set $P(t+1)= \text{Best $N$ solutions in $C(t)\cup P(t)$}$
        \IF{$P(t) = P(t+1)$}
        \STATE Update $d= d-dec$
        \IF{$d\leq 0$}
        \STATE Initialize $P(t+1)$ with the best solution of $P(t)$ and $N-1$ random solutions.
        \STATE Evaluate new solutions in $P(t+1)$
        \STATE Set $d = \textit{Average distance of solutions in $P(t+1)$}$
        \STATE Set $dec =\tau \cdot \textit{Maximum distance of solutions in $P(t+1)$}$
        \ENDIF
        \ENDIF
        \STATE $t=t+1$
        \ENDWHILE
        \RETURN Best solution in $P(t)$
    \end{algorithmic}
\end{algorithm}

The distance between two solutions $x$, $y$ in the population is computed using the Hamming distance (lines 4-5 and 20-21 in Algorithm \ref{alg:chc}), and the decrement $dec$ of the crossover (lines 5 and 21) is updated as a percentage $\tau$ of the maximum Hamming distance between individuals in the population, where $\tau \in [0,1]$ is an update rate, an input parameter to the algorithm. In the \textit{CHC-Discrete} algorithm, if a non-valid solution $x$ is generated by crossover, then it is assigned with a fitness equals to infinity so that it is ensured they are not in the population for the next generation. The reconstruction procedure is not included in the \textit{CHC-Order} algorithm to not interfere with the components of the algorithm to balance exploration and search intensification.

The algorithm starts by initializing a population $P(t)$ with $N$ random solutions. Then, the average and maximum distances between all solutions are computed. The crossover threshold $d$ is assigned to the average distance, and a threshold update rate $dec$ is initialized to $\tau$ multiplied by the maximum distance. The main loop of the algorithm finishes when the aforementioned stopping condition is met. It works as follows: Firstly, the solutions in $P(t)$ are randomly shuffled and matched by pairs. These pairs of solutions are the parents to be combined. Then, the crossover operator is applied to each pair of parents to generate two offsprings, only if the distance between the two parents is not under the distance threshold $d$. If so, the offsprings are evaluated, and the population at the next iteration $P(t+1)$ is created containing the best $N$ solutions coming from $P(t)$ and the new generated solutions by crossover. If $P(t+1)$ is the same as $P(t)$, the crossover threshold $d$ is decreased by $dec$. Only when $d$ is zero or under zero, the population is reinitialized. In our work, the reinitialization procedure replaces $P(t+1)$ with $N-1$ randomly generated solutions, and the best solution in $P(t)$. The values $d$ and $dec$ are recalculated for this new population.

More information about the structure of a generational genetic algorithm and the CHC can be consulted in \cite{BackFogelMichalewicz1997, eshelman1991} as classic introductory texts.

\subsection{Traditional problem statement and solution: discrete representation}

As we have mentioned previously in the Introduction, the representation scheme used to the date in the literature is the discrete representation. 
Suppose a linear code $\mathcal{C}$ of length $n$ and dimension $k$ over the finite field with $q$ elements $\mathbb{F}_q$ is given by a generating matrix $G \in \mathbb{F}_q^{k \times n}$. Then, each codeword $c\in\mathcal{C}$ is obtained by the encoding process of certain message $m\in \mathbb{F}_q^k$ as $c= mG$. Thus, the traditional problem statement attempts to find a message $m=(m_1, m_2, ..., m_k) \in \mathbb{F}^k_q$, with $m\not = 0$, such that the Hamming weight of its corresponding codeword $mG$ is minimum. With this problem statement, the solution representation assigns a solution/chromosome $x$ to a corresponding message $m$ as phenotype, whilst the genotype is composed by $k$ decision variables or genes whose values are in the set $\mathbb{F}_q$. 
Assuming that the arithmetic over the finite field is in $\mathcal{O}(1)$, then the fitness calculation is in $\mathcal{O}(kn)$. 

As the solution representation is discrete, several known crossover and mutation operators for discrete representation \cite{Gwiazda2006} can be selected, i.e. one point, two points, uniform crossover, random mutation by gene, etc. We performed a pre-experimentation to find out the best crossover and mutation settings for our experiments, and we have concluded that the uniform crossover and random mutation with mutation probability by gene are the best choices for \emph{GGA-Discrete}, as it is discussed in the experimental section.  The binary tournament selection \cite{Blickle97acomparison} was selected as the parent selection operator in \textit{GGA-Discrete} for the same reason. The implemented crossover for \emph{CHC-Discrete}  is also the uniform crossover.

As commented before, we have not found proposals in the literature that solve the problem using finite fields greater than $\mathbb{F}_2$, so that they could use binary representation instead of discrete representation. The binary representation for specific problems over $\mathbb{F}_2$ have some advantages from a practical perspective besides the higher range of possibilities to choose crossover and mutation operators, since the addition and multiplication in  $\mathbb{F}_2$ can be easily implemented as logic XOR and AND operators, and this can accelerate the execution time significantly using bitwise arithmetics. In our case, since, to the best of our knowledge, this paper is the first one to address the problem using metaheuristics with non binary finite fields, we use discrete representation.

\subsection{A new problem statement and solution: order representation}

The result stated in Section \ref{sec:permRep} allows us to change the solution space of the traditional problem statement to the space of permutations of $n$ elements, $\mathcal{S}_n$, where $n$ is the length of the linear code under study. Our proposal does not rely on the encoding process of a message $s\in\mathbb{F}_q^n$. Actually, it is independent of the encoding process and, therefore, of the finite field size, and this property is very convenient in case of a large finite field. 

According to Theorem, the problem of finding the minimum distance $\operatorname{d}(\mathcal{C})$ of a linear code $\mathcal{C}$ can be stated as finding a permutation $x\in \mathcal{S}_n$ for which a row $r$ of the RREF of a given generating matrix $G_{k,n}$ of $\mathcal{C}$ has Hamming weight $\operatorname{w}(r)=\operatorname{d}(\mathcal{C})$, where $k$ is the dimension of $\mathcal{C}$. Under this statement, a solution/chromosome of the population is assigned to an unique phenotype $x$, a permutation of the $n$ columns of the generating matrix $G$, and contains $n$ decision variables or genotype whose values can be in $\{1, 2, ..., n\}$, where each number stands for a column position in the matrix. Given a permutation $x=(x_1, x_2, ..., x_n)$, a value $x_i$ means that a matrix column located initially at position $i$ changes its location to the column position $x_i$ in the permutation (see Example \ref{ex2}).

The fitness of a permutation $x\in \mathcal{S}_n$ is then computed as $$f(x)= min\{ \operatorname{w}(r_i): r_i \text{ a row of } RREF(G_{n,k}P_x) \},$$
where $P_x$ denotes the permution matrix of $x$, i.e. the minimum weight of all rows of the RREF of the matrix $G_{n,k}$ whose columns have been permuted according to $x$. An optimal solution $x^*$ is a permutation such that $f(x^*)= \operatorname{d}(\mathcal{C})$, so that $x^*= min_x \{f(x)\}$, and the problem is stated as a minimization problem. The classic scholar algorithm to calculate RREF \cite[Chapter 5]{Liesen/Mehrmann:2015} can be implemented with complexity $\mathcal{O}(kn^2)$, and the product $G_{k,n}P_x$ can be implemented in $\mathcal{O}(n)$ in the most efficient implementation (for example, reassignments of $n$ memory pointers to matrix columns in C/C++ language). Thus, the complexity of the fitness function is in $\mathcal{O}(kn^2)$. It is worth mentioning that this theoretical complexity is far worse than the fitness calculation of the classic problem statement explained in the previous section. However, as we show in the experiments, this increase in complexity is well paid out with a substantial reduction in solution evaluations/algorithm iterations to find the optimal solution, which can be even reached with a random search in small problem instances, and to overcome local optima.

Regarding the genetic algorithm operators, classic crossover procedures do not consider the finite group structure of the set of all permutations $\mathcal{S}_n$. Intuitively, for a column permutation of a code matrix, the more non pivot columns are permuted to the first column positions of the code matrix, the better fitness it could have. Therefore, one could expect that the composition of permutations with good fitness, may produce a solution with better fitness. Additionally, since two (or more) random permutations probably form a generator system \cite[Theorem 1]{dixon}, the whole space of solutions is reached by their composition. In this paper we propose the algebraic crossover $AX_m$ with $m>1$: Given $m$ solutions from the population $\{x_1, x_2,\ldots ,x_m\}$, we construct the set of permutations given by their composition in any order, that is,
\[
\mathcal{T}=\{x_{\tau{(1)}}\circ x_{\tau{(2)}}\circ \cdots \circ x_{\tau{(r)}} \text{ such that }\tau\in \mathcal{S}_m\}.
\]
From this set, we select the $m$ solutions with lower image under $f(x)$ (that is, with better fitness) which replace the original $m$ solutions. Therefore, taking \(m\) a divisor of \(n\), the algebraic crossover operator $AX_m$ partitions the population into subsets of $m$ elements and, for each subset, with a given probability $p_c$, it recombines the elements as described above. Obviously, the crossover complexity increases exponentially with the parameter $m$, although the exploration of the space also grows. In the experiments performed in this paper, we use a fixed value $m=2$ to generate two new solutions from two parent solutions.  The same crossover is implemented in the \emph{CHC-Order} algorithm.

In the mutation step, we follow a standard mutation operator: the 2-swap mutation, i.e. the composition with a transposition. Nevertheless, permuting two columns which are beyond the last pivot does not modify the fitness. 
For reasons of efficiency, we simply choose a column from the first $k$ columns and other from the remaining $n-k$ columns in the generating matrix. The mutation operator is then applied to those solutions that were not recombined with the crossover operator.

\begin{example}\label{ex2}
Let us show the evolution of a population of chromosomes during the \emph{GGA-Order} algorithm execution in order to illustrate the modules described in this subsection. Let $\mathcal{C}$ be the $[4,10]_4$-linear code with generating matrix
$$G_{4, 10}=\left(\begin{array}{cccccccccc}
a & 1 & a + 1 & 0 & 0 & a & 1 & 1 & a + 1 & a \\
a & 0 & a & 1 & 1 & a + 1 & a & a + 1 & a + 1 & a \\
0 & 1 & a + 1 & 1 & a + 1 & a & 0 & a & a & a + 1 \\
a & 1 & 1 & 0 & 0 & 1 & a + 1 & a & a + 1 & 1
\end{array}\right).$$
Suppose that we have a population  formed by the following four chromosomes.
$$\begin{array}{ccc}
c_1 =& [6, 4, 3, 9, 7, 10, 2, 1, 8, 5] & \to \text{fitness} = 4\\
c_2 =& [9, 3, 7, 10, 1, 4, 5, 2, 8, 6] & \to \text{fitness} = 6\\
c_3 =& [1, 8, 9, 7, 6, 2, 3, 10, 4, 5] & \to \text{fitness} = 5\\
c_4 =& [2, 9, 8, 3, 4, 10, 6, 5, 7, 1] & \to \text{fitness} = 5.
\end{array}$$
The list associated to each chromosome describes the permutation as a map. For instance, $c_1$ maps 4 to 9, since $c_1[4]=9$, or $c_3$ maps 10 to 5, since $c_1[10]=5$. The fitness of each permutation is computed as described un Example \ref{ex:rep}.

Let us suppose that the probability of a crossover is $p_c=0.5$. Then we generated $d_1$ and $d_2$ from crossover over parents $c_1$ and $c_2$, whilst this operator is not applied to $c_3$ and $c_4$. Hence 
$$\begin{array}{ccc}
d_1 = c_1 \circ c_2 = & [4, 10, 7, 8, 5, 6, 3, 9, 2, 1] & \to \text{fitness} = 5\\
d_2 = c_2 \circ c_1 = & [8, 3, 2, 5, 6, 9, 7, 4, 1, 10] & \to \text{fitness} = 5
\end{array}$$
On the other hand, we applied a mutation (composition with a transposition) to $c_3$ and $c_4$ in order to obtain offsprings $d_3$ and $d_4$. For instance, let us suppose
$$\begin{array}{lll}
d_3 = (1,10) \circ c_3 = & [5, 8, 9, 7, 6, 2, 3, 10, 4, 1] & \to \text{fitness} = 5\\
d_4 = (3,6) \circ c_4  = & [2, 9, 10, 3, 4, 8, 6, 5, 7, 1] & \to \text{fitness} = 4
\end{array}$$
The evolved population is then $\{d_1,d_2,d_3,d_4\}$. Observe that elitism is not applied, since $d_4$ is as good as $c_1$.
%Figure \ref{f2} illustrates an example of the $AX_2$ crossover operator and the generational relieve for a population of \(4\) solutions. The code $\mathcal{C}$ under consideration is the $[6,3]_8$-linear code with generating matrix
%    \[
%    G = \left(\begin{matrix}
%    a^{5} & 0 & a^{5} & a^{6}  & a &
%    0 \\
%    a^{4}  & a & 1 & 0 & a & a^{2} \\
%    a^{5}  & a^{4} & a^{6} & a^{4} &
%    a^{2} & 1
%    \end{matrix}\right)\in \mathbb{F}^{3\times 6}_8.
%    \]
%    
%We may observe that the two first parents of the initial population (left) are recombined with the composition operator, and the remaining are mutated with the 2-swap operator, where positions 1 and 5 are exchanges for solution 3, and positions 2 and 4 are changed for solution 4. After evaluation and elitism, the population of the next generation (right) contains the best solution found (highlighted).
%
%
%\begin{figure}[ht]
%        \[
%        \includegraphics[width=12.5cm]{images/example2.jpg}
%        \]
%        \caption{Construction of the next generation}\label{f2}
%\end{figure}
\end{example}

\section{Experimentation}\label{Section3}

We consider \(30\) different datasets to test our approach. The first \(20\) are generating matrices of linear codes over \(\mathbb{F}_8\), from the database \textsc{Magma} \cite{Magma}, a specific software for computer algebra, using the command \texttt{BKLC} (an acronym of \emph{Best Known Linear Code}). They are labelled as \((n,k,d_1-d_2)\), where \(n\) is the length, \(k\) the dimension, \(d_1\) a lower bound and \(d_2\) an upper bound for the distance. The other \(10\) datasets are generating matrices of codes over \(\mathbb{F}_2\). Five of them are the generating matrices of narrow sense BCH (Bose–Chaudhuri–Hocquenghem) taken from \textsc{Magma} with the command \texttt{BCHCode}. They are labelled \(\BCH(n,k,\delta)\), where \(n,k\) are the length and the dimension, and \(\delta\) is the designed distance, a lower bound established during the code design. Finally, remaining five datasets are the generating matrices of EQR (Extended Quadratic Residues) codes. These matrices were obtained adding a parity-check bit to the generating matrices of suitable Quadratic Residues codes obtained from \textsc{Magma} via the command \texttt{QRCode}. They are labelled \(\EQR(n,k,d)\) where \(n,k\) denote the length and dimension, and \(d\) the code distance obtained from \cite{Leon1988}.

\subsection{Experimental design}

\begin{table}[!ht]
	\centering
\begin{small}
	\begin{tabular}{|c||c||c||c||c||c|}
		\hline
		 \textbf{Field} & \((n,k,d_1-d_2)\) & \((n,k,d_1-d_2)\) & \((n,k,d_1-d_2)\) & \((n,k,d_1-d_2)\) & \((n,k,d_1-d_2)\)\\
		\hline
$\mathbb{F}_8$ & \((30 , 16 , 10-13)\) & 
\((30 , 18 , 9-11)\) & 
\((45 , 22 , 15-21)\) & 
\((45 , 24 , 14-19)\) & 
\((45 , 26 , 12-17)\) \\
$\mathbb{F}_8$ & \((45 , 28 , 11-15)\) & 
\((60 , 28 , 21-28)\) & 
\((60 , 30 , 20-27)\) & 
\((60 , 32 , 19-25)\) & 
\((60 , 34 , 17-23)\) \\
$\mathbb{F}_8$ & \((75 , 30 , 28-40)\) & 
\((75 , 35 , 24-35)\) & 
\((75 , 40 , 20-31)\) & 
\((75 , 45 , 17-27)\) & 
\((90 , 19 , 49-63)\) \\
$\mathbb{F}_8$ & \((90 , 50 , 21-35)\) & 
\((90 , 60 , 16-26)\) & 
\((90 , 70 , 11-17)\) & 
\((130 , 85 , 23-40)\) & 
\((130 , 95 , 18-30)\)\\
		\hline
		  & \(\EQR(n,k,d)\) & \(\EQR(n,k,d)\) & \(\EQR(n,k,d)\) & \(\EQR(n,k,d)\) & \(\EQR(n,k,d)\)\\
\hline
$\mathbb{F}_2$ & \(\EQR(272,176,40)\) & 
\(\EQR(338,169,40)\) & 
\(\EQR(368,184,48)\) & 
\(\EQR(432,216,48)\) & 
\(\EQR(440,220,48)\) \\
		\hline
		  & \(\BCH(n,k,\delta)\) & \(\BCH(n,k,\delta)\) & \(\BCH(n,k,\delta)\) & \(\BCH(n,k,\delta)\) & \(\BCH(n,k,\delta)\)\\
\hline
$\mathbb{F}_2$ & \(\BCH(511,76,171)\) & 
\(\BCH(511,103,123)\) & 
\(\BCH(511,121,117)\) & 
\(\BCH(511,166,95)\) & 
\(\BCH(511,184,91)\) \\
         \hline
	\end{tabular}
\end{small}
	\caption{Description of the codes selected for the experimentation}
	\label{table:codesettings}
\end{table}

The experiments were conducted in two stages: In the first one (Section \ref{exp:classicRepr}), the order representation proposed in this work is compared with the classic discrete representation. The objective of such study is to test if the genetic algorithms using the order representation outperform the ones with discrete representation, regarding the quality of the results for codes over $\mathbb{F}_8$. The second phase (Section \ref{exp:previousAppr}) compares the results of our proposal with some state-of-the-art methods existing in the literature. More specifically, the baseline method employed for comparison is published in \cite{askali13}. As no previous study was found over a field greater than $\mathbb{F}_2$, only the EQR and BCH codes are contemplated in this stage. Section \ref{exp:discussion} discusses the results and describes the conclusions obtained.

Finally, \(100\) experiments with different random seed were performed to obtain a set of results that can be analyzed statistically. We used a desktop equipped with CPU Intel Core I7, 8GB RAM, and Ubuntu 18.04 O.S. as experiment host.

\subsection{Analysis of results over $\mathbb{F}_8$}\label{exp:classicRepr}

The experimentation was performed with two different algorithms: GGA and CHC. Both were implemented using discrete and order representation, as described in Section \ref{sec:algs}, in separate experiments, so that there are four different configurations to be tested: GGA-Discrete, GGA-Order, CHC-Discrete, and CHC-Order. The reason of including the CHC algorithms in the experimentation is their ability to overcome local optima due to the their component design to establish a balance between diversity and convergence. This strategy is beneficial to obtain better solutions by avoiding premature convergence of traditional GGAs.

The algorithms' parameters were set after a trial-and-error procedure. As the number of datasets is high, these parameters were not fitted specifically for each problem, but as those that can provide good results in average. Additional experiments were conducted individually for the algorithms with the worst performance to find better parameter settings, with no success. 

The parameters used in the experimentation are shown in Table \ref{table:parametersF8}. The row \textit{Mutation probability by gene} is describes the probability of mutation of each coordinate in a solution. The row \textit{Evaluations} stands for the main stopping criterion of each algorithm (i.e. to reach a maximum number of evaluations). The row \textit{Reinitialization} shows the number of solutions evaluated with no improvements required to reinitialize the population in GGA proposals. A secondary stopping criterion was set for all algorithms: if the true known distance of the code being evaluated (or a lower bound) is found, the algorithm stops even if the main stopping criterion is not satisfied yet. 

\begin{table}[!ht]
	\centering
	\begin{tabular}{|c||c||c||c||c|}
		\hline
		 \textbf{Parameter} & \textbf{GGA-Discrete} & \textbf{CHC-Discrete} & \textbf{GGA-Order} & \textbf{CHC-Order}\\
		\hline
Population size & 400 & 400 & 400 & 400\\
Crossover & Uniform & Uniform & \(AX_2\) & \(AX_2\) \\
Mutation & Random & -- & 2-swap & -- \\
Crossover probability & 0.7 & -- & 0.8 & --\\
Mutation probability by gene & 0.01 & -- & -- & -- \\
Evaluations & 500000 & 500000 & 500000 & 500000 \\
Reinitialization & 100000 & -- & 100000 & -- \\
         \hline
	\end{tabular}
	\caption{Algorithms' parameters for experimentation with codes over $\mathbb{F}_8$}
	\label{table:parametersF8}
\end{table}

Tables \ref{table:resultsDiscreteF8} and \ref{table:resultsOrderF8} show the results obtained for each dataset containing a code over $\mathbb{F}_8$. The columns \textbf{Dataset} display the parameters of the corresponding best known linear codes according to \cite{codetables}. Columns \textbf{Best} describe the minimum weight found by each algorithm, and the subindices indicate the number of experiments that provided such result. Columns \textbf{Worst} introduce the worst minimum weight obtained. Columns \textbf{Mean} show the average of the minimum weights obtained in the \(100\) experiments. Finally, Columns \textbf{Time/Eval.} print the average time in seconds of each experiment, together with the average number of evaluations required to finish it.

\begin{table}[!ht]
	\centering
\begin{small}
	\begin{tabular}{|l|c|c|c|c|c|c|c|c|}
		\hline
		& \multicolumn{4}{c|}{\textbf{CHC (discrete)}} & \multicolumn{4}{c|}{\textbf{GGA (discrete)}}  \\ \hline
		\textbf{Dataset} & \textbf{Best} & \textbf{Worst} & \textbf{Mean} & \textbf{Time / Eval.} & \textbf{Best} & \textbf{Worst} & \textbf{Mean} & \textbf{Time / Eval.}\\ 
		\hline
		\((30,16,10-13)\) & $10_{(100)}$ & $10$ & $10_{(1)}$ & $0.136_{(4)}  / 7117.3$ & $10_{(100)}$ & $10$ & $10_{(1)}$ & $0.015_{(3)}  / 6074.6$\\
		\((30,18,9-11)\) & $9_{(100)}$ & $9$ & $9_{(1)}$ & $0.152_{(4)}  / 7745.6$ & $9_{(100)}$ & $9$ & $9_{(1)}$ & $0.02_{(3)}  / 6742.8$\\
		\((45,22,15-21)\) & $15_{(90)}$ & $16$ & $15.1_{(2)}$ & $1.865_{(4)}  / 148352.3$ & $15_{(100)}$ & $15$ & $15_{(1)}$ & $0.304_{(3)}  / 85878.5$\\
		\((45,24,14-19)\) & $14_{(100)}$ & $14$ & $14_{(1)}$ & $0.245_{(4)}  / 15373.6$ & $14_{(100)}$ & $14$ & $14_{(1)}$ & $0.032_{(3)}  / 10307$\\
		\((45,26,12-17)\) & $12_{(100)}$ & $12$ & $12_{(1)}$ & $0.717_{(4)}  / 43823.4$ & $12_{(100)}$ & $12$ & $12_{(1)}$ & $0.193_{(3)}  / 49539.9$\\
		\((45,28,11-15)\) & $11_{(100)}$ & $11$ & $11_{(1)}$ & $0.309_{(4)}  / 13732$ & $11_{(100)}$ & $11$ & $11_{(1)}$ & $0.039_{(3)}  / 10605.2$\\
		\((60,28,21-28)\) & $21_{(80)}$ & $22$ & $21.2_{(2)}$ & $2_{(4)}  / 164512.4$ & $21_{(81)}$ & $22$ & $21.2_{(2)}$ & $0.867_{(3)}  / 196025.9$\\
		\((60,30,20-27)\) & $20_{(100)}$ & $20$ & $20_{(1)}$ & $0.4_{(4)}  / 21209.3$ & $20_{(100)}$ & $20$ & $20_{(1)}$ & $0.062_{(3)}  / 13229.9$\\
		\((60,32,19-25)\) & $19_{(100)}$ & $19$ & $19_{(1)}$ & $0.26_{(4)}  / 19573.3$ & $19_{(100)}$ & $19$ & $19_{(1)}$ & $0.053_{(3)}  / 12651.8$\\
		\((60,34,17-23)\) & $17_{(98)}$ & $18$ & $17_{(1)}$ & $1.929_{(4)}  / 122798.7$ & $17_{(100)}$ & $17$ & $17_{(1)}$ & $0.421_{(3)}  / 78914.9$\\
		\((75,30,28-40)\) & $28_{(2)}$ & $32$ & $30.1_{(2)}$ & $3.995_{(4)}  / 309250.5$ & $29_{(3)}$ & $32$ & $30.4_{(3)}$ & $2.754_{(3)}  / 500000$\\
		\((75,35,24-35)\) & $24_{(1)}$ & $28$ & $27.6_{(2)}$ & $3.685_{(3)}  / 228937$ & $26_{(2)}$ & $28$ & $27.6_{(2)}$ & $3.369_{(2)}  / 500000$\\
		\((75,40,20-31)\) & $20_{(7)}$ & $24$ & $22.4_{(2)}$ & $3.968_{(4)}  / 262879.6$ & $20_{(2)}$ & $25$ & $22.8_{(3)}$ & $1.005_{(3)}  / 144057$\\
		\((75,45,17-27)\) & $18_{(97)}$ & $20$ & $18.1_{(2)}$ & $10.35_{(2)}  / 500000$ & $18_{(88)}$ & $20$ & $18.2_{(3)}$ & $4.208_{(1)}  / 500000$\\
		\((90,19,49-63)\) & $49_{(6)}$ & $54$ & $51.9_{(2)}$ & $2.151_{(4)}  / 260640.5$ & $49_{(5)}$ & $52$ & $51.8_{(2)}$ & $1.56_{(3)}  / 285425.6$\\
		\((90,50,21-35)\) & $27_{(17)}$ & $29$ & $27.8_{(2)}$ & $9.174_{(2)}  / 500000$ & $25_{(1)}$ & $29$ & $27.8_{(2)}$ & $4.77_{(1)}  / 500000$\\
		\((90,60,16-26)\) & $18_{(1)}$ & $21$ & $20.2_{(2)}$ & $12.475_{(2)}  / 500000$ & $19_{(29)}$ & $21$ & $20.1_{(2)}$ & $5.71_{(1)}  / 500000$\\
		\((90,70,11-17)\) & $12_{(2)}$ & $14$ & $13_{(2)}$ & $17.405_{(4)}  / 500000$ & $11_{(1)}$ & $14$ & $13_{(2)}$ & $1.554_{(3)}  / 120408$\\
		\((130,85,23-40)\) & $29_{(1)}$ & $32$ & $31.3_{(3)}$ & $20.469_{(2)}  / 500000$ & $29_{(2)}$ & $32$ & $31_{(2)}$ & $10.538_{(1)}  / 500000$\\
		\((130,95,18-30)\) & $22_{(8)}$ & $24$ & $23.4_{(3)}$ & $23.758_{(2)}  / 500000$ & $22_{(21)}$ & $24$ & $22.9_{(2)}$ & $11.745_{(1)}  / 500000$\\
		\hline
	\end{tabular}
\end{small}
\caption{Results of CHC-Discrete and GGA-Discrete over $\mathbb{F}_8$ codes}
\label{table:resultsDiscreteF8}
\end{table}

\begin{table}[!ht]
	\centering
\begin{small}
	\begin{tabular}{|l|c|c|c|c|c|c|c|c|}
		\hline
		& \multicolumn{4}{c|}{\textbf{CHC (order)}} & \multicolumn{4}{c|}{\textbf{GGA (order)}}  \\ \hline
		\textbf{Dataset} & \textbf{Best} & \textbf{Worst} & \textbf{Mean} & \textbf{Time / Eval.} & \textbf{Best} & \textbf{Worst} & \textbf{Mean} & \textbf{Time / Eval.}\\ \hline
		\((30,16,10-13)\) & $10_{(100)}$ & $10$ & $10_{(1)}$ & $0.004_{(2)}  / 1.6$ & $10_{(100)}$ & $10$ & $10_{(1)}$ & $0.003_{(1)}  / 1.6$\\
		\((30,18,9-11)\) & $9_{(100)}$ & $9$ & $9_{(1)}$ & $0.005_{(2)}  / 2$ & $9_{(100)}$ & $9$ & $9_{(1)}$ & $0.004_{(1)}  / 2$\\
		\((45,22,15-21)\) & $15_{(100)}$ & $15$ & $15_{(1)}$ & $0.01_{(2)}  / 57.8$ & $15_{(100)}$ & $15$ & $15_{(1)}$ & $0.009_{(1)}  / 57.8$\\
		\((45,24,14-19)\) & $14_{(100)}$ & $14$ & $14_{(1)}$ & $0.011_{(2)}  / 13.4$ & $14_{(100)}$ & $14$ & $14_{(1)}$ & $0.01_{(1)}  / 13.4$\\
		\((45,26,12-17)\) & $12_{(100)}$ & $12$ & $12_{(1)}$ & $0.011_{(2)}  / 9.5$ & $12_{(100)}$ & $12$ & $12_{(1)}$ & $0.01_{(1)}  / 9.5$\\
		\((45,28,11-15)\) & $11_{(100)}$ & $11$ & $11_{(1)}$ & $0.011_{(2)}  / 1.9$ & $11_{(100)}$ & $11$ & $11_{(1)}$ & $0.01_{(1)}  / 1.9$\\
		\((60,28,21-28)\) & $21_{(100)}$ & $21$ & $21_{(1)}$ & $0.019_{(2)}  / 42.8$ & $21_{(100)}$ & $21$ & $21_{(1)}$ & $0.017_{(1)}  / 42.8$\\
		\((60,30,20-27)\) & $20_{(100)}$ & $20$ & $20_{(1)}$ & $0.019_{(2)}  / 5$ & $20_{(100)}$ & $20$ & $20_{(1)}$ & $0.018_{(1)}  / 5$\\
		\((60,32,19-25)\) & $19_{(100)}$ & $19$ & $19_{(1)}$ & $0.021_{(2)}  / 8.3$ & $19_{(100)}$ & $19$ & $19_{(1)}$ & $0.02_{(1)}  / 8.3$\\
		\((60,34,17-23)\) & $17_{(100)}$ & $17$ & $17_{(1)}$ & $0.021_{(2)}  / 18.4$ & $17_{(100)}$ & $17$ & $17_{(1)}$ & $0.021_{(1)}  / 18.4$\\
		\((75,30,28-40)\) & $28_{(100)}$ & $28$ & $28_{(1)}$ & $0.029_{(2)}  / 19.1$ & $28_{(100)}$ & $28$ & $28_{(1)}$ & $0.028_{(1)}  / 19.1$\\
		\((75,35,24-35)\) & $24_{(100)}$ & $24$ & $24_{(1)}$ & $0.12_{(1)}  / 1269.2$ & $24_{(100)}$ & $24$ & $24_{(1)}$ & $0.134_{(1)}  / 1716.2$\\
		\((75,40,20-31)\) & $20_{(100)}$ & $20$ & $20_{(1)}$ & $0.037_{(2)}  / 5.9$ & $20_{(100)}$ & $20$ & $20_{(1)}$ & $0.036_{(1)}  / 5.9$\\
		\(\mathbf{(75,45,17-27)}\) & $\mathbf{15}_{(100)}$ & $\mathbf{15}$ & $\mathbf{15}_{(1)}$ & $11.695_{(2)}  / 72438.6$ & $\mathbf{15}_{(100)}$ & $\mathbf{15}$ & $\mathbf{15}_{(1)}$ & $12.713_{(2)}  / 77303.3$\\
		\((90,19,49-63)\) & $49_{(100)}$ & $49$ & $49_{(1)}$ & $0.179_{(1)}  / 3279.7$ & $49_{(100)}$ & $49$ & $49_{(1)}$ & $0.252_{(2)}  / 5249.6$\\
		\((90,50,21-35)\) & $22_{(34)}$ & $24$ & $22.7_{(1)}$ & $41.898_{(4)}  / 267888.3$ & $22_{(28)}$ & $24$ & $22.8_{(1)}$ & $38.437_{(3)}  / 239954.9$\\
		\((90,60,16-26)\) & $16_{(100)}$ & $16$ & $16_{(1)}$ & $13.265_{(3)}  / 78140$ & $16_{(99)}$ & $17$ & $16_{(1)}$ & $15.42_{(3)}  / 86775.2$\\
		\((90,70,11-17)\) & $11_{(100)}$ & $11$ & $11_{(1)}$ & $0.066_{(1)}  / 48.9$ & $11_{(100)}$ & $11$ & $11_{(1)}$ & $0.066_{(2)}  / 48.9$\\
		\((130,85,23-40)\) & $23_{(1)}$ & $26$ & $24.9_{(1)}$ & $63.325_{(3)}  / 135700$ & $23_{(2)}$ & $26$ & $25_{(1)}$ & $145.695_{(4)}  / 293522$\\
		\((130,95,18-30)\) & $18_{(88)}$ & $19$ & $18.1_{(1)}$ & $84.811_{(3)}  / 185147.7$ & $18_{(80)}$ & $19$ & $18.2_{(1)}$ & $87.796_{(3)}  / 177852.8$\\
		\hline
	\end{tabular}
\end{small}
\caption{Results of CHC-Order and GGA-Order over $\mathbb{F}_8$ codes}
\label{table:resultsOrderF8}
\end{table}

Statistical tests were applied over the mean of the upper bounds obtained in the experiments for each data set (columns \textbf{Mean}), to know if there are significant differences between the average behavior of the algorithms. For each dataset, all algorithms were sorted by its mean in ascending order, and the non-parametric Wilcoxon test with 95\% of confidence level was applied to compare them in the aforementioned sequence. According to the results, we ranked the algorithms from \(1\) to \(4\), where \(1\) is the best algorithm, and \(4\) the worst one. The rank order of each algorithm is depicted in columns \textbf{Mean} as subindices. We remark that two algorithms with the same rank says that there is no statistical difference between their performance.

According to the statistical tests, Table \ref{table:resultsOrderF8} shows that CHC-Order and GGA-Order return the best results, and there are no statistical difference between both algorithms in terms of performance. In fact, both are able to output the true distance in all datasets except for the code $(90, 50, 21-35)$, where a new upper bound, \(22\), is stablished. In particular, the optimum was reached in all experiments for \(17\) datasets (CHC-order) and \(16\) datasets (GGA-order). The worst behavior was obtained by both algorithms for the code $(130, 85, 23-40)$, where the optimum was reached in \(1\) and \(2\) experiments for CHC-order and GGA-order, respectively. Something similar happens with the code $(130, 95, 18-30)$. In this case, the true minimum distance was found in \(88\) experiments by CHC-order, and in \(80\) experiments by GGA. Both algorithms return a distance of \(19\) in the worst case.

We highlight the results for the code $(70, 45, 17-27)$ for which we found out an upper bound of the distance that equals \(15\). According to \cite{codetables}, the code is constructed using the $(77, 46, 18)$ linear code proposed in \cite{XingLing2000}. As our result falls below the claimed lower bound, we made additional analyses and compute a codeword whose weight equals \(15\) both in $(70, 45, 17-27)$ and $(77, 46, 18)$ codes. Both generating matrices are provided by the webpage \cite{codetables} and \textsc{Magma} \cite{Magma}. Regarding the discovered inaccuracy, Section \ref{sec:magma} shows our findings and describes explicitly some codewords of weight \(15\).

In contrast, table \ref{table:resultsDiscreteF8} shows that CHC-Discrete and GGA-Discrete provided the worst performance, and there are significant differences being compared with the algorithms that use the order representation. We may see that discrete representation algorithms are able to find the optimum code distance for small codes, up to length $n=75$. For larger codes, they get trapped into local optima. In fact, discrete representation-based approaches suffer from high selective pressure and get trapped into local optima near to the $(0, \dots, 0) \in \mathbb{F}_q^k$ trivial solution, which is not a valid solution, and this makes difficult to explore the solution space suitably. To give support to this statement, Figure \ref{fig:diversityF8} plots the evolution of the diversity during the execution of the algorithms GGA-Discrete and GGA-Order (right-hand figure), and GGA-Discrete and CHC-Discrete (left-hand figure), for the best solutions found for the code $(90, 50, 21)$. The diversity is shown in log scale. If we compare GGA-Discrete and GGA-Order, we may see how diversity in GGA-Discrete drops fast to near zero, and only returns to high values when the population is reinitialized. However, GGA-Order remains constant through all the iterations. 
On the other hand, if we compare CHC-Discrete \textit{vs} GGA-Discrete, they show a similar behavior. The main difference between the two graphics displayed in Figure \ref{fig:diversityF8}
is that the number of iterations in CHC-Discrete is higher, and the diversity drop is softer. This behavior is explained by the components of CHC to find a balance between diversity and convergence. However, according to Table \ref{table:resultsDiscreteF8}, it is not enough to overcome the high selective pressure of the trivial solution, and no statistical difference can be found regarding performance.

\begin{figure}[!ht]
	\centering
	\begin{subfigure}
		{\epsfig{file = 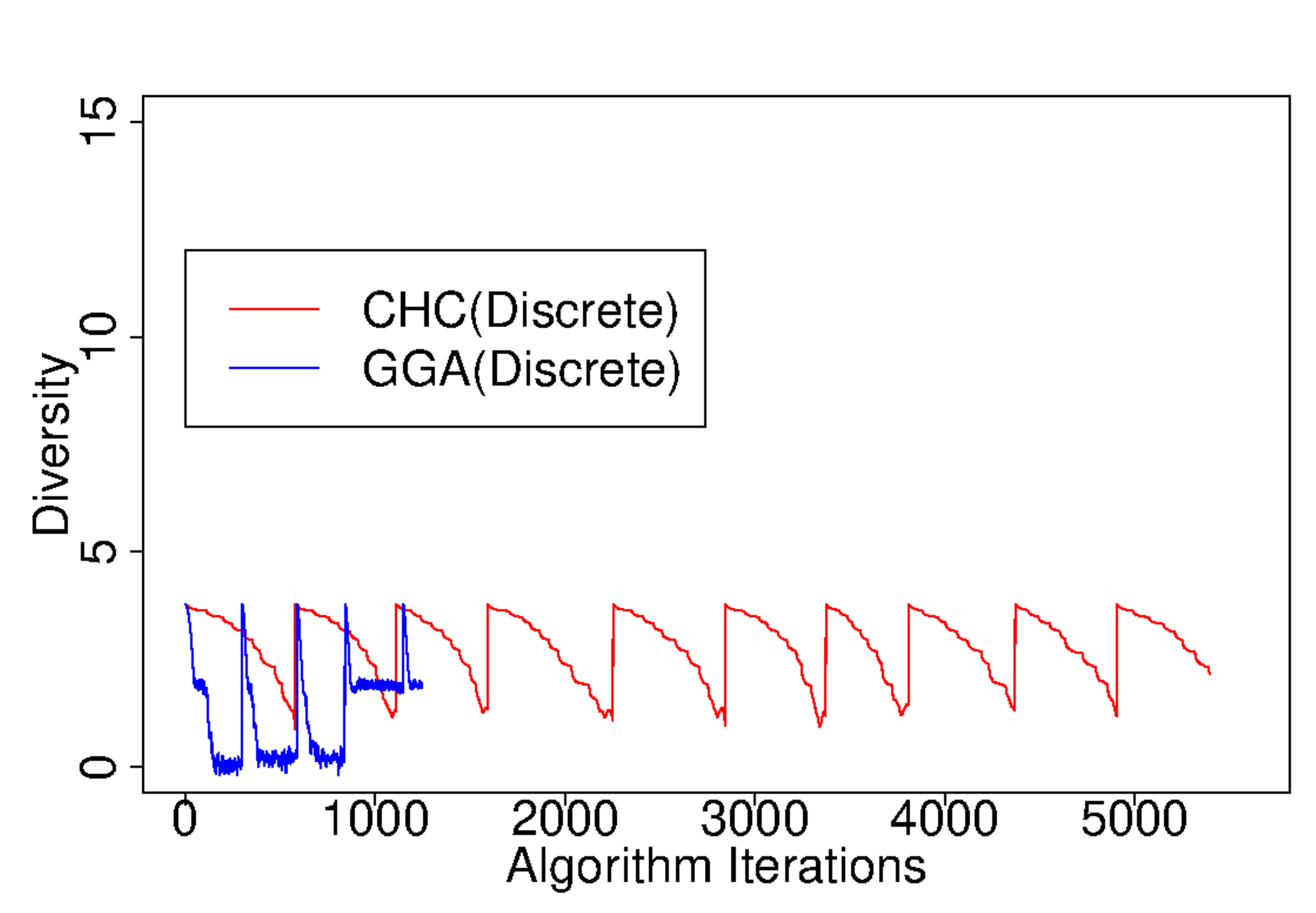, width = 7cm}
			\label{fig:GGACHCDiscreteGF8}}
	\end{subfigure}	
	\rulesep
	\begin{subfigure}
		{\epsfig{file = 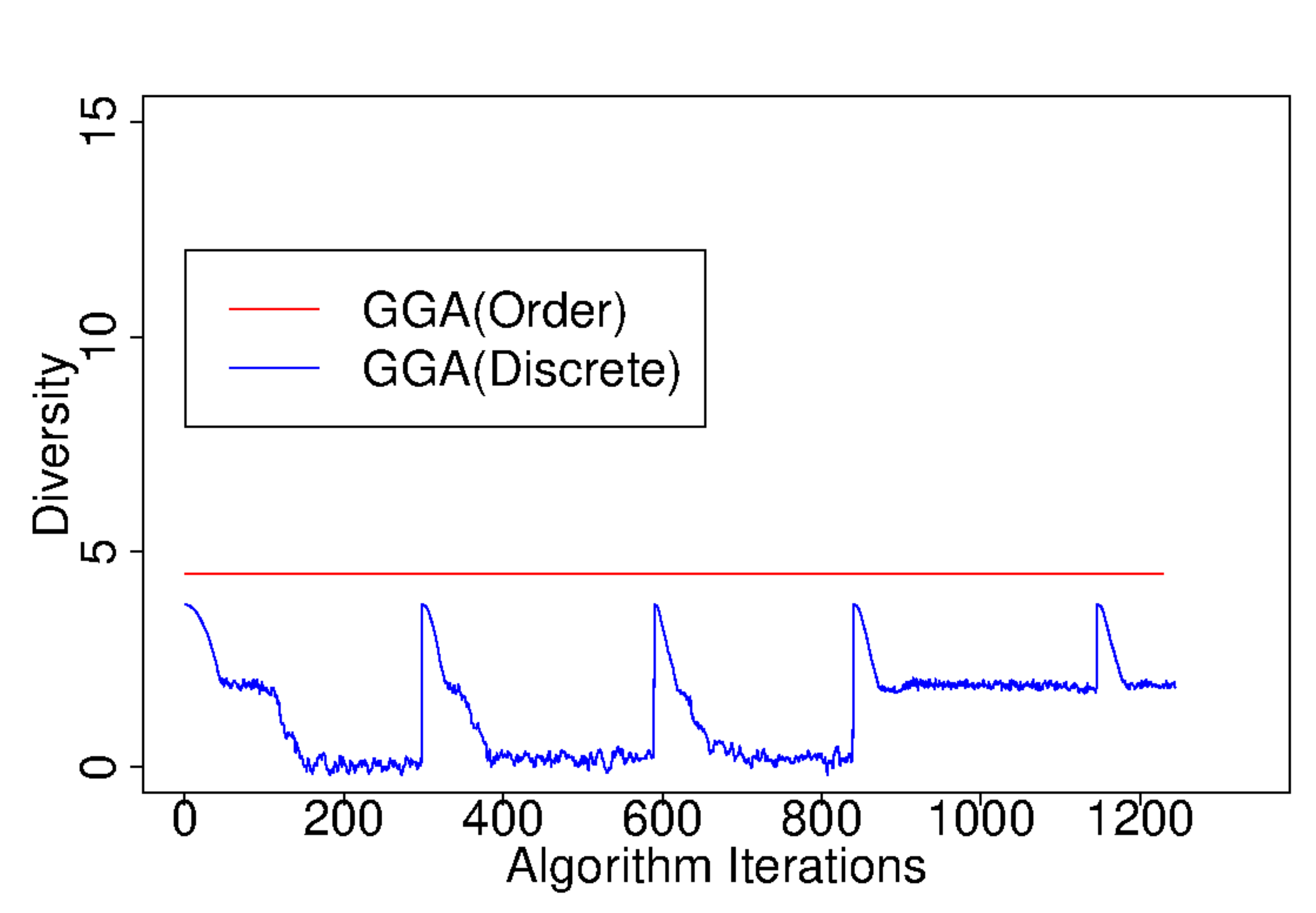, width = 7cm}
			\label{fig:GGAGGAGF8}}
	\end{subfigure}
	\caption{Evolution of the population diversity: GGA-Discrete vs CHC-Discrete (left), and GGA-Order vs GGA-Discrete (right)}
	\label{fig:diversityF8}
\end{figure}

To finish the analysis of the results, we are also interested in the comparison of the time complexities. The four algorithms were sorted by their average execution time in ascending order, and the Wilcoxon test with \(95\%\) of confidence level was applied to rank them all. The subindices in columns \textbf{Time/Eval.} of Tables \ref{table:resultsDiscreteF8} and \ref{table:resultsOrderF8} show the rank of each algorithm, where a lower value means that the algorithm is faster. The main bottleneck between discrete and order approaches is the fitness evaluation function, which are in $\mathcal{O}(kn)$ and $\mathcal{O}(k^2 n)$ respectively. Under this consideration, it could be expected that GGA-Discrete and CHC-Discrete were faster than GGA-Order and CHC-Order. This occurs only for larger codes, where the number of evaluations is high. In the case of small codes, discrete representation approaches are worse than the order representation ones in terms of execution time. In fact, if we focus on GGA-Order and CHC-Order for small codes, the average number of evaluations required to obtain the distance is even lower than the size of the population, meaning that those optimal solutions could be easily found by a random search. This fact provides an additional interest in the proposed method in this manuscript, since it shows empirically that using the RREF simplifies the search of the minimum distance at an experimental level.

To conclude this section, we remark the following outcomes: (a) The proposed order representation scheme, in conjunction with the strategy of computing the RREF of the generating matrix, help to overcome the limitations of the classic discrete representation in terms of falling into local optima; (b) despite the higher time complexity to compute the RREF, this strategy changes the search space which eases finding optimal solutions, even with a random search for small problems, as it is suggested by analyzing the number of evaluations needed to reach the optimum, printed in Tables \ref{table:resultsDiscreteF8} and \ref{table:resultsOrderF8}; and (c) the experimental results suggest that designing  metaheuristics with permutation representation can be an efficient and competitive tool to calculate minimum distance. In the codes studied, for which the minimum distance was unknown, our approach was able to find the true code distance in \(18\) of the \(20\) datasets, and a new upper bound was found for the remaining. It was also able to detect an inaccuracy in the code $(75,45,17-27)$, establishing that \(15\) is a new upper bound of its distance, see Section \ref{sec:magma}.

\subsection{Analysis of results over $\mathbb{F}_2$.}\label{exp:previousAppr}

In this section we analyze the results for BCH and EQR codes over $\mathbb{F}_2$. We also set the parameters from Table \ref{table:parametersF8}, except the maximum number of evaluations (stopping criterion) and the size of the population, fixed after a trial-and-error procedure. The number of evaluations to stop the algorithms was set to $1000000$ in the case of BCH codes, and $500000$ for EQR codes. The size of the population was set to $200$ for $\BCH(511, 103, 123)$, $\BCH(511, 166, 95)$ and $\BCH(511, 184, 91)$, since in a preliminary experimentation we observed a high space exploration but low exploitation. The size of the population was $400$ for the other codes. Finally, other changes regarding crossover probability, crossover method, reinitialization, etc., did not provide improvements with respect to accuracy of results, and they were set as in Table \ref{table:parametersF8}. 

We remark that the designed distance of a BCH codes is a lower bound of the minimum distance, although it is not ensured that they coincide. This is the case of $\BCH(511, 103, 123)$, with designed distance \(123\) and minimum distance \(127\), and $\BCH(511, 121, 117)$ (designed distance \(117\), minimum distance \(119\)). The minimum distances for these codes are computed in \cite[Table I]{AugotCharpinSendrier1992}.

\begin{table}[!ht]
	\centering
\begin{scriptsize}
	\begin{tabular}{|l|c|c|c|c|c|c|c|c|}
		\hline
		& \multicolumn{4}{c|}{\textbf{CHC (order)}} & \multicolumn{4}{c|}{\textbf{GGA (order)}}  \\ \hline 
		\textbf{Dataset} & \textbf{Best} & \textbf{Worst} & \textbf{Mean} & \textbf{Time / Eval.} & \textbf{Best} & \textbf{Worst} & \textbf{Mean} & \textbf{Time / Eval.}\\ \hline 
		\(\BCH(511,103,123)\) & $127_{(2)}$ & $144$ & $132.5_{(1)}$ & $1340.28_{(1)}  / 1000000$ & $127_{(4)}$ & $144$ & $133.6_{(1)}$ & $1379.63_{(2)}  / 1000000$\\
		\(\BCH(511,121,117)\) & $120_{(3)}$ & $139$ & $133.3_{(1)}$ & $1725.28_{(2)}  / 1000000$ & $120_{(1)}$ & $139$ & $134.1_{(1)}$ & $1712.15_{(1)}  / 1000000$\\
		\(\BCH(511,166,95)\) & $95_{(1)}$ & $119$ & $114.3_{(1)}$ & $2912.57_{(2)}  / 519668$ & $95_{(1)}$ & $119$ & $114.4_{(1)}$ & $2780.14_{(1)}  / 196196$\\
		\(\BCH(511,76,171)\) & $171_{(100)}$ & $171$ & $171_{(1)}$ & $0.627_{(1)}  / 4424.6$ & $171_{(100)}$ & $171$ & $171_{(1)}$ & $0.55_{(1)}  / 254$\\
		\(\BCH(511,58,183)\) & $183_{(100)}$ & $183$ & $183_{(1)}$ & $0.245_{(2)}  / 805$ & $183_{(100)}$ & $183$ & $183_{(1)}$ & $0.21_{(1)}  / 1175.5$\\
		\(\EQR(272,136,40)\) & $40_{(100)}$ & $40$ & $40_{(1)}$ & $0.35_{(2)}  / 3604.8$ & $40_{(100)}$ & $40$ & $40_{(1)}$ & $0.327_{(1)}  / 4784.8$\\
		\(\EQR(338,169,40)\) & $40_{(94)}$ & $50$ & $40.3_{(1)}$ & $5.058_{(1)}  / 156803.8$ & $40_{(88)}$ & $50$ & $40.6_{(1)}$ & $11.551_{(2)}  / 170266.8$\\
		\(\EQR(368,184,48)\) & $48_{(13)}$ & $56$ & $54.2_{(1)}$ & $317.353_{(1)}  / 345111.2$ & $48_{(10)}$ & $60$ & $54.7_{(1)}$ & $801.654_{(2)}  / 246819$\\
		\(\EQR(432,216,48)\) & $48_{(11)}$ & $68$ & $62_{(1)}$ & $561.347_{(1)}  / 197260.3$ & $48_{(5)}$ & $68$ & $63.6_{(2)}$ & $1453.67_{(2)}  / 180514.4$\\
		\(\EQR(440,220,48)\) & $48_{(17)}$ & $72$ & $65.8_{(2)}$ & $526.696_{(2)}  / 217904.2$ & $48_{(82)}$ & $68$ & $50.9_{(1)}$ & $176.835_{(1)}  / 167620.9$\\
		\hline 
	\end{tabular} 
\end{scriptsize}
\caption{Results of CHC-Order and GGA-Order over $\mathbb{F}_2$ codes}
\label{table:resultsOrderF2}
\end{table}

Table \ref{table:resultsOrderF2} shows the results obtained for BCH and EQR codes using the algorithms CHC-Order and GGA-Order. Both algorithms reach the true minimum distance in all cases, except for $\BCH(511, 121, 117)$ whose distance is \(119\) according to \cite[Table I]{AugotCharpinSendrier1992}. In spite of this, they were able to find an upper bound of distance with the same error correcting capability.

If we focus on the number of experiments that reached the optimal code distance in the datasets, we may observe the same behavior for both algorithms: As the dimension $k$ increases, the probability of success for an experiment to reach the optimum decreases. Thus, for the codes $\BCH(511, 58, 183)$, $\BCH(511, 76, 171)$, and $\EQR(272, 136, 40)$, both algorithms obtain the optimum in all experiments, whilst for %the codes 
$\BCH(511, 166, 95)$, $\EQR(432, 216, 48)$ and $\EQR(440, 220, 48)$, the number of experiments that provide the optimum decreases substantially, specially for CHC-Order for the latter code. 
Nevertheless, the proposal still seems to be competitive for length $511$.

On the other hand, if we analyze the worst/average solution with regard to the codes over $\mathbb{F}_8$ in the previous section, we may see that the differences with true distances are greater for the codes studied in this section. This is also expected, since the problem size increases, and the algorithms get trapped into local optima more easily. This also affects the computational time of each experiment, which requires a few minutes to end with the larger codes.

Regarding performance, we analyzed the results of the average minimum distance of the outputs using the Wilcoxon test with \(95\%\) of confidence level. We follow the same methodology that is described in the previous section. In the case of $\mathbb{F}_2$-codes, no significant statistical difference was noticed except for $\EQR(440, 220, 48)$, where GGA-Order is clearly better than CHC-Order. No statistical difference was also found in the computational time for each dataset, except for $\EQR(432, 216, 48)$, where CHC-Order is better. These results are consistent with the analysis carried out over $\mathbb{F}_8$-codes in the previous subsection.

We also compare the results of our approach with previous existing methods in the literature. More specifically, we compare our method over BCH codes with the proposal of Askali et al. in \cite{askali13}, where two variants of genetic algorithms are designed with discrete representation. This work was selected for comparison for the following reasons: (a) there are not much methods in the literature that attempt to calculate the distance of general linear codes, and this is the most recent (2013), as far as we know; (b) it also deals with genetic algorithms, so it is close to our approach; and (c) it calculates the distance for BCH codes of medium size, as we do. The second baseline is the work of Leon \cite{Leon1988}, where a probabilistic method is designed to estimate bounds of some EQR codes. This baseline helps to ensure that our method was able to obtain the true distance for these EQR codes.

Other approaches for specific codes exist in the literature, as for instance for BCH codes \cite{AugotCharpinSendrier1992}, but they impose hard constraints to exploit the code geometry, and therefore they are not comparable with our work, which is a general proposal for any linear code. 

\begin{table}[!ht]
	\centering
	\begin{tabular}{|c||c||c|}
		\hline
		 \textbf{Dataset} & \textbf{CHC-Order} & \textbf{Baselines (\cite{askali13, Leon1988})}\\
		\hline
\(\BCH(511,103,123)\) & \(127\) & \(160\) \cite{askali13}\\
\(\BCH(511,121,117)\) & \(120\) & \(155\) \cite{askali13}\\
\(\BCH(511,166,95)\) & \(95\) & \(135\) \cite{askali13}\\
\(\BCH(511,76,171)\) & \(171\) & \(176\) \cite{askali13}\\
\(\BCH(511,58,183)\) & \(183\) & \(183\) \cite{askali13}\\
\(\EQR(272,136,40)\) & \(40\) & \(40\) ($p=10^{-100}$)\cite{Leon1988}\\
\(\EQR(338,169,40)\) & \(40\) & \(40\) ($p=10^{-100}$)\cite{Leon1988}\\
\(\EQR(368,184,48)\) & \(48\) & \(48\) ($p=10^{-20}$)\cite{Leon1988}\\
\(\EQR(432,216,48)\) & \(48\) & \(48\) ($p=10^{-10}$)\cite{Leon1988}\\
\(\EQR(440,220,48)\) & \(48\) & \(48\) ($p=10^{-10}$)\cite{Leon1988}\\
         \hline
	\end{tabular}
	\caption{Distances found by the proposal CHC-order and the baseline methods.}
	\label{table:askaliComparison}
\end{table}

Table \ref{table:askaliComparison} summarizes the comparison between the results of the CHC-Order and the baseline methods in \cite{askali13, Leon1988}. 
In the case of \cite{Leon1988}, the probability of being the minimum distance is also provided. As we may observe, the proposal described in this manuscript outperforms clearly the baseline methods for BCH codes, and confirms the distance calculated with a given probability in \cite{Leon1988}. In the case of \cite{askali13}, the same conclusion that we pointed out in Section \ref{exp:classicRepr} for discrete representation applies: As the baseline method uses discrete representation, a high selective pressure over the trivial solution $(0, \dots, 0) \in \mathbb{F}_q^k$ makes the algorithms get trapped into local optima. 
The order representation does not suffer from premature convergence, and the search is speeded up by the computation of the RREF, so it improves the results.

\subsection{Discussion}\label{exp:discussion}

Finding the minimum distance of a linear code is an NP-hard problem. Actually, the associated decision problem is NP-complete for binary codes, see \cite{Vardy1997}. In this setting, our experiments have shown that genetic algorithms are good candidates to handle this problem. The results suggest that the discrete representation, in spite of being efficient and adequate to include the classic standard techniques, specially for binary codes,  suffers of premature convergence. Such an issue seems to be due
to the high selective pressure over the zero tuple, which is not a valid solution. Nevertheless, the weight of a message is not necessary correlated with the weight of its encoded word, so that local optima near the zero message could be far from an optimal codeword. Our proposal helps to overcome these drawbacks. This order representation is able to provide good results in conjunction with the strategy of computing the RREF of a generating matrix. Although computing the RREF, as fitness evaluation function, increases the theoretical time complexity from $\mathcal{O}(kn)$ to $\mathcal{O}(k^2n)$, we have shown that this disadvantage 
is palliated by the speed of the convergence to the optimal solution, which requires much less evaluations.  The experiments over the field $\mathbb{F}_8$ reveal significant differences between the standard use of the discrete representation and the algorithms with order representation that we present in this paper. Actually, it is foreseeable achieving bigger differences when working over larger fields. The experiments over $\mathbb{F}_2$ also show that GGA-Order and CHC-Order outperform the best solutions found in \cite{askali13}, and they are be able to find the approximations in \cite{Leon1988}.

Finally, the experiments carried out in Sections \ref{exp:classicRepr} and \ref{exp:previousAppr}  show that there is no significant difference between GGA-Order and CHC-Order regarding accuracy and computational time for the codes of largest length. Actually, they were able to find the true distance for all cases but for the code \(\BCH(511, 121, 117)\), for which we found an upper bound of \(120\) whilst the true distance \cite[Table I]{AugotCharpinSendrier1992} is \(119\).
In this case, we could assess the benefits of both algorithms qualitatively: While the designed GGA-Order requires to set a mutation operator and a population reinitialization criterion, CHC-Order does not use mutation, and it includes internal mechanisms to maintain diversity in the population. Thus, from the point of view of technical experimentation, CHC-Order has the advantage that it requires less parameter settings than GGA-Order and it could be preferable. However, the distance between solutions has to be computed, which makes the algorithm design more complex.

\section{Analysis of the code $(75, 45, 17-27)$ over $\mathbb{F}_8$}\label{sec:magma}

According to \cite{codetables} and \textsc{Magma} \cite{Magma}, the best known linear code over \(\mathbb{F}_8\) with length \(75\) and dimension \(45\) is \((75,45,17-27)\) as it can be verified executing the first 7 lines of the \textsc{Magma} source code in Figure \ref{alg:75_45_17}. However, our algorithms obtain the column permutation in \eqref{eq:754517perm}, which provides a matrix whose RREF contains a row whose weight is \(15\). A codeword with this weight is exhibited in \eqref{eq:754517word}. \textsc{Magma} source code to verify our findings is set in lines 9 to 13 in Figure \ref{alg:75_45_17}, which can be tested in the online version of \textsc{Magma} at the URL \texttt{http://magma.maths.usyd.edu.au/calc/}. Thus, the minimum distance is less or equal to \(15\), which is below of the lower bound provided by \textsc{Magma}. 

\begin{figure}
	\caption{\textsc{Magma} source code exhibiting a codeword with weight \(15\) for the code \((75, 45, 17-27)\)}
	\label{alg:75_45_17}
\begin{tabular}{|c|}
\hline 
\begin{minipage}{0.95\textwidth}
\begin{verbatim}

/* Build the code */
F<a> := GF(8);
V := VectorSpace(F,75);
C := BKLC(F,75,45);
C:Minimal;
/* Show the lower and upper bounds for the code */
BKLCLowerBound(F,75,45), BKLCUpperBound(F,75,45);
/* Set the word of weight 15 */
c := V ! [a^2 + a, 0, 0, 0, 0, 0, 0, 0, 0, 0, 0, 0, 0, 0, a, 0, 0, 1, 0, 0, 0, 0, 0, 
0, 0, 0, 0, 0, 0, 0, a, 0, 0, 0, 0, 0, 0, a^2 + a, a^2, 0, a^2 + a + 1, 1, 0, a + 1, 
0, 0, a, 0, 0, 0, 0, 0, a^2 + a + 1, 0, 0, 0, a + 1, 1, 0, 0, 0, 0, 0, a^2 + a + 1, 
0, 0, 0, 0, 0, 0, a + 1, 0, 0, 0, 0];
/* Checks if the word c belongs to code C */
c in C;
/* Prints the weight of the codeword c in code C */
Weight(c);

\end{verbatim} 
\end{minipage} \\
\hline
\end{tabular}
\end{figure}

\begin{multline}\label{eq:754517perm}
%\begin{split}
(53,11,58,36,17,27,44,8,73,24,20,71,69,46,10,43,26,61,29,57,23,6,5,67,14,4,50,45,72,59,18,25,\\
47,28,51,68,22,48,52,42,60,21,38,64,16,34,2,3,31,33,49,63,74,54,62,19,7,1, 0,15,13,66,39,65,\\
41,30,12,37,32,70,56,40,55,9,35)
%\end{split}
\end{multline}

\begin{multline}\label{eq:754517word}
(a^2 + a, 0, 0, 0, 0, 0, 0, 0, 0, 0, 0, 0, 0, 0, a, 0, 0, 1, 0, 0, 0, 0, 0, 0, 0, 0, 0, 0, 0, 0, a, 0, 0, 0, 0, 0, 0, a^2 + a, a^2, 0,\\
 a^2 + a + 1, 1, 0, a + 1, 0, 0, a, 0, 0, 0, 0, 0, a^2 + a + 1, 0, 0, 0,
 a + 1, 1, 0, 0, 0, 0, 0, a^2 + a + 1, 0, 0, 0, 0, 0, 0,\\
 a + 1, 0, 0, 0, 0)
\end{multline}

According to \cite{codetables}, the linear code \((75, 45, 17-27)\) is constructed from the code with parameters $(77, 46, 18)$ described in \cite{XingLing2000}. By additional experiments we checked if the minimum distance of the code $(77, 46, 18)$ is \(18\). We found out that the column permutation in \eqref{eq:774618perm} leaded to find the codeword shown in \eqref{eq:774618word}, as a row of the RREF of the modified matrix. The weight of this word is also \(15\), which is a new upper bound for the code. Figure \ref{alg:77_46_18} prints the \textsc{Magma} code that ensured these findings.

\begin{multline}\label{eq:774618perm}
(62,4,38,66,31,76,36,56,35,15,57,23,39,18,68,12,40,2,61,32,8,7,21,55,34,65,24,13,\\
25,29,11,46,10,50,9,16,51,19,72,75,71,59,43,44,17,0,26,1,22,74,73,45,3,28,67,63,49,\\
30,33,70,54,42,60,64,53,69,6,14,52,5,20,48,37,41,47,58,27)
\end{multline}

\begin{multline}\label{eq:774618word}
(1, 0, 0, 0, 0, 0, 0, 0, 0, 0, 0, 0, 0, 0, 0, 0, 0, 0, 0, 0, a^2 + 1, 0, a + 1, 0, 0, 0, 0, 1, a + 1, 0, 0, 0, 0, 0, 0, 0, 0, 1, 0,\\ 
0, 0, a + 1, 1, 0, 0, a + 1, 0, 1, 1, 0, 0, 0, 0, 0, 1, 0, 0, 0, 0, 0, 0, 0, 0, 0, a + 1, 0,
  0, 0, 0, 0, a, 0, 0, 0, a^2 + 1, 0, 0)
\end{multline}

\begin{figure}
	\caption{\textsc{Magma} source code exhibiting a codeword with weight \(15\) for the code \((77, 46, 18)\)}
	\label{alg:77_46_18}
\begin{tabular}{|c|}
\hline 
\begin{minipage}{0.95\textwidth}
\begin{verbatim}

/* Build the code */
F<a> := GF(8);
V := VectorSpace(F,77);
C := BKLC(F,77,46);
C:Minimal;
/* Show the lower and upper bounds for the code */
BKLCLowerBound(F,77,46), BKLCUpperBound(F,77,46);
/* Set the word of weight 15 */
c := V ! [1, 0, 0, 0, 0, 0, 0, 0, 0, 0, 0, 0, 0, 0, 0, 0, 0, 0, 0, 0, a^2 + 1, 0, 
a + 1, 0, 0, 0, 0, 1, a + 1, 0, 0, 0, 0, 0, 0, 0, 0, 1, 0, 0, 0, a + 1, 1, 0, 0, 
a + 1, 0, 1, 1, 0, 0, 0, 0, 0, 1, 0, 0, 0, 0, 0, 0, 0, 0, 0, a + 1, 0, 0, 0, 0, 0, 
a, 0, 0, 0, a^2 + 1, 0, 0]; 
/* Checks if the word c belongs to code C */
c in C;
/* Prints the weight of the codeword c in code C */
Weight(c);

\end{verbatim}
\end{minipage} \\
\hline
\end{tabular}
\end{figure}

Additional tests were performed for other codes considered in \cite{XingLing2000}. More specifically, we calculated the permutations in \eqref{eq:754418perm} and \eqref{eq:764518perm} for the codes $(75, 44, 18)$ and $(76, 45, 18)$, respectively, which helps us finding the corresponding codewords in \eqref{eq:754418word} and \eqref{eq:764518word}, both with weight \(15\). The \textsc{Magma} source codes which check that an upper bound for these codes is \(15\) is shown in Figures \ref{alg:75_44_18} and \ref{alg:76_45_18}, respectively.

\begin{multline}\label{eq:754418perm}
(33,23,13,29,53,30,71,7,12,35,34,57,60,39,52,5,24,4,56,44,31,25,43,40,11,6,\\
22,32,26,63,62,17,18,38,9,72,2,8,19,69,36,73,21,65,42,68,14,10,20,59,58,27,28,\\
51,66,47,48,37,67,54,70,74,16,46,3,1,15,50,55,0,41,45,61,64,49)
\end{multline}

\begin{multline}\label{eq:754418word}
(a^2 + a, 0, 0, 0, 0, 0, 0, 0, 0, 0, 0, 0, 0, 0, 0, 0, 0, 0, 0, 0, a + 1, 0, 1, 0, 0, 0, 0, a^2 + a, 1,\\
0, 0, 0, 0, 0, 0, 0, 0, a^2 + a, 0, 0, 0, 1, a^2 + a, 0, 0, 1, 0, a^2 + a, a^2 + a, 0, 0, 0, 0, 0, a^2 + a, \\
0, 0, 0, 0, 0, 0, 0, 0, 0, 1, 0, 0, 0, 0, 0, a^2 + a + 1, 0, 0, 0, a + 1)
\end{multline}

\begin{multline}\label{eq:764518perm}
(25,56,43,14,58,11,61,57,29,10,26,51,31,8,32,73,7,53,46,60,4,12,35,67,59,2,55,\\
36,17,6,34,24,16,72,52,44,21,9,71,19,65,62,68,45,49,75,15,42,0,48,64,13,28,40,\\
23,27,33,3,70,37,1,50,74,54,38,41,30,63,69,5,22,18,20,66,39,47)
\end{multline}

\begin{multline}\label{eq:764518word}
(a^2 + a, 0, 0, 0, 0, 0, 0, 0, 0, 0, 0, 0, 0, 0, 0, 0, 0, 0, 0, 0, a + 1, 0, 1, 0, 0, 0, 0, a^2 + a, 1, 0, 0, 0, 0, 0, 0, 0, 0,\\
 a^2 + a, 0, 0, 0, 1, a^2 + a, 0, 0, 1, 0, a^2 + a, a^2 + a, 0, 0, 0, 0, 0, a^2 + a, 0, 0, 0, 0, 0, 0, 0, 0, 0, 1, 0, 0, 0, 0, 0, \\
 a^2 + a + 1, 0, 0, 0, a + 1, 0)
\end{multline}

\begin{figure}
	\caption{\textsc{Magma} source code exhibiting a codeword with weight \(15\) for the code \((75, 44, 18)\)}
	\label{alg:75_44_18}
\begin{tabular}{|c|}
\hline 
\begin{minipage}{0.95\textwidth}
\begin{verbatim}

/* Build the code */
F<a> := GF(8);
V := VectorSpace(F,75);
C := BKLC(F,75,44);
C:Minimal;
/* Show the lower and upper bounds for the code */
BKLCLowerBound(F,75,44), BKLCUpperBound(F,75,44);
/* Set the word of weight 15 */
c := V ! [a^2 + a, 0, 0, 0, 0, 0, 0, 0, 0, 0, 0, 0, 0, 0, 0, 0, 0, 0, 0, 0, a + 1, 
0, 1, 0, 0, 0, 0, a^2 + a, 1, 0, 0, 0, 0, 0, 0, 0, 0, a^2 + a, 0, 0, 0, 1, a^2 + a, 
0, 0, 1, 0, a^2 + a, a^2 + a, 0, 0, 0, 0, 0, a^2 + a, 0, 0, 0, 0, 0, 0, 0, 0, 0, 1, 
0, 0, 0, 0, 0, a^2 + a + 1, 0, 0, 0, a + 1];
/* Checks if the word c belongs to code C */
c in C;
/* Prints the weight of the codeword c in code C */
Weight(c);

\end{verbatim}
\end{minipage} \\
\hline
\end{tabular}
\end{figure}

\begin{figure}
	\caption{\textsc{Magma} source code exhibiting a codeword with weight \(15\) for the code \((76, 45, 18)\)}
	\label{alg:76_45_18}
\begin{tabular}{|c|}
\hline 
\begin{minipage}{0.95\textwidth}
\begin{verbatim}

/* Build the code */
F<a> := GF(8);
V := VectorSpace(F,76);
C := BKLC(F,76,45);
C:Minimal;
/* Show the lower and upper bounds for the code */
BKLCLowerBound(F,76,45), BKLCUpperBound(F,76,45);
/* Set the word of weight 15 */
c := V ! [a^2 + a, 0, 0, 0, 0, 0, 0, 0, 0, 0, 0, 0, 0, 0, 0, 0, 0, 0, 0, 0, a + 1, 
0, 1, 0, 0, 0, 0, a^2 + a, 1, 0, 0, 0, 0, 0, 0, 0, 0, a^2 + a, 0, 0, 0, 1, a^2 + a, 
0, 0, 1, 0, a^2 + a, a^2 + a, 0, 0, 0, 0, 0, a^2 + a, 0, 0, 0, 0, 0, 0, 0, 0, 0, 1, 
0, 0, 0, 0, 0, a^2 + a + 1, 0, 0, 0, a + 1];
/* Checks if the word c belongs to code C */
c in C;
/* Prints the weight of the codeword c in code C */
Weight(c);

\end{verbatim}
\end{minipage} \\
\hline
\end{tabular}
\end{figure}

\section{Conclusions}\label{Section4}

We have shown that, for a given generating matrix of a linear code, there exists a column permutation for which its reduced row echelon form has a row whose weight reaches the minimum distance. This opens the possibility to use a permutation representation in metaheuristics to find the minimum distance of an arbitrary linear code, as an alternative to the classic discrete representation. The proposed model is polynomial time-dependent with respect to the dimension of the code and the size of the base finite field.
We have developed different schemes of evolutionary algorithms to prove the benefits of our approach experimentally. We have concluded that usual limitations of discrete representation, such as high selective pressure, are palliated by means of the new representation. Therefore, our approach is able to find true minimum distances of general linear codes of medium size, and it is the first work to address the problem for codes over finite fields larger than $\mathbb{F}_2$ using metaheuristics. Comparison with the existing methods in the literature suggests that our proposal is accurate and efficient for BCH and EQR codes over $\mathbb{F}_2$, and able to outperform previous approaches. In addition, we have been able to find some inacuracies in the list of best known linear codes in \cite{codetables} and \cite{Magma}.

The run time of the calculation of the reduced row echelon form increases polynomically with the code length and dimension. Future works will be conducted to reduce this execution time, and also to include information about specific code geometry and properties into the evolutionary process to speed up convergence to optimal solutions.

\section*{Acknowledgements}

Research partially supported by grant \textup{PID2019-110525GB-I00}
%\emph{Algebraic and heuristic methods in Coding Theory and Cryptography} 
from {Agencia Estatal de Investigaci\'{o}n (AEI) and from Fondo Europeo de Desarrollo Regional (FEDER)}, and by grant \textup{A-FQM-470-UGR18} from Consejer\'{\i}a de Econom\'{\i}a y Conocimiento de la Junta de Andaluc\'{\i}a and Programa Operativo FEDER 2014-2020. 

\bibliographystyle{elsarticle-num}
%\bibliography{mybibfile}

%% Authors are advised to use a BibTeX database file for their reference list.
%% The provided style file elsarticle-num.bst formats references in the required Procedia style

%% For references without a BibTeX database:

% \begin{thebibliography}{00}

%% \bibitem must have the following form:

%%   \bibitem{key}...
%%

% \bibitem{}

% \end{thebibliography}

\end{document}